\definecolor{Gray}{gray}{0.9}
\providecommand{\tabularnewline}{\\}
\theoremstyle{plain}
\newtheorem{thm}{\theoremname}
\theoremstyle{definition}
\newtheorem{defn}{\definitionname}
\theoremstyle{remark}
\theoremstyle{plain}
\newtheorem{lem}{\lemmaname}
\theoremstyle{plain}
\newtheorem{prop}{\propositionname}
\DeclareMathOperator*{\argmin}{arg\,min}
\newcommand{\norm}{\text{Norm}}
\providecommand{\definitionname}{Definition}
\providecommand{\lemmaname}{Lemma}
\providecommand{\propositionname}{Proposition}
\providecommand{\remarkname}{Remark}
\providecommand{\theoremname}{Theorem}
\date{}
\begin{document}

\title{Lattice Reduction over Imaginary Quadratic Fields}

\author{Shanxiang Lyu,~\IEEEmembership{Member,~IEEE}, Christian Porter and Cong Ling,~\IEEEmembership{Member,~IEEE}
	\thanks{The
		work of S. Lyu was supported in part by the National Natural Science Foundation of China under Grants 61902149 and 62032009, in part by the Natural Science Foundation of Guangdong Province under Grant 2020A1515010393, in part by the Fundamental Research Funds for the Central Universities under Grant 21620438, and in part by the Major Program of Guangdong Basic and Applied Research under Grant 2019B030302008. 
		The material in this paper was presented in part at the IEEE Information Theory Workshop 2018, Guangzhou, China, and the IEEE Information Theory Workshop 2019, Visby, Gotland, Sweden.} 
	\thanks{S. Lyu is with College of Cyber Security,
		Jinan University,
		Guangzhou 510632, China (e-mail: shanxianglyu@gmail.com), and also with the State Key Laboratory of Cryptology, P.O. Box 5159, Beijing, 100878, China. }
	\thanks{ C. Porter and C. Ling are with the Department of Electrical
		and Electronic Engineering, Imperial College London, London SW7 2AZ,
		United Kingdom (e-mail: c.porter17@imperial.ac.uk, cling@ieee.org). }}

\maketitle
\begin{abstract}
Complex bases, along with direct-sums defined by rings of imaginary quadratic
integers, induce algebraic lattices. In this work, we study such lattices and their reduction algorithms. 
Firstly, when the lattice is spanned over a two dimensional basis, we show that the algebraic variant of Gauss's algorithm returns a basis that corresponds to the successive minima of the lattice if the chosen ring is Euclidean. Secondly, we extend the celebrated Lenstra-Lenstra-Lov\'asz (LLL) reduction from over real bases to over complex bases. Properties and implementations of the algorithm are examined. In particular, satisfying Lov\'asz's condition requires the ring to be Euclidean. Lastly, we numerically show the time-advantage of using algebraic LLL by considering lattice bases generated from wireless communications and cryptography.
\end{abstract}

\begin{IEEEkeywords}
lattice reduction, algebraic lattices, Gauss's algorithm, LLL, Euclidean.
\end{IEEEkeywords}

\section{Introduction}

\IEEEPARstart{L}{attice}  reduction is to find a basis with short and nearly orthogonal vectors when given a basis as input. Its applications in signal processing, information theory, and cryptology include: designing finite wordlength FIR filters \cite{lion18},
reducing the channel matrices in lattice-reduction-aided MIMO detection/precoding \cite{Wuebben2011,park11,park12,ahmad13,wkma14,Liu2012a}, designing the network coding coefficients in compute-and-forward \cite{FSK13,Sun2013,yuan16}, 
cryptanalysing lattice-based cryptographic systems \cite{Nguyen2010}, etc.
Initially the lattices involved feature direct-sums defined over integers $\mathbb{Z}$ or Gaussian integers $\mathbb{Z}[i]$ \cite{Wuebben2011,FSK13}, but in recent years there has been a surge on using more compact signal constellations and algebraic lattice codes \cite{Tunali2015,jerry2018,Stern2016}. 
Thanks to the algebraic structure, reduction algorithms utilizing this advantage \cite{Gan2009,Stern2016} often save a large amount of computational cost.

On the one hand,
lattice reduction has been well investigated for conventional $\mathbb{Z}$-lattices. Some of the reduction algorithms include: 
the celebrated Lenstra-Lenstra-Lov\'asz (LLL)  \cite{Lenstra1982} and its variants  \cite{Schnorr1994,Chang2005,Lyu2017}, block-Korkine-Zolotarev (BKZ)  \cite{Chen2011b}, Korkine-Zolotarev (KZ)  \cite{Lagarias1990,wen19}, and Minkowski 
\cite{Minkowski1905,Zhang2012tsp}. On the other hand,
for $\mathcal{O}_{\mathbb{K}}$-lattices where $\mathcal{O}_{\mathbb{K}}$
denotes the ring of integers of a number field $\mathbb{K}$, the
reduction techniques can be classified based on whether the lattice
vectors lie in $\mathcal{O}_{\mathbb{K}}$ or the complex field $\mathbb{C}$.
The first scenario arises quite often in lattice-based cryptography,
and much work has been done in generalizing LLL for such lattices
\cite{Napias1996,Fieker1996,Fieker2010,Kim2017}. Napias's work \cite{Napias1996} extends LLL to lattices defined by Euclidean rings contained in a CM number field or a quaternion field.  Fieker and Pohst's approach \cite{Fieker1996}
defines LLL over Dedekind domains, while Fieker and Stehl\'e's approach
\cite{Fieker2010} is to apply LLL to an equivalent higher dimensional
$\mathbb{Z}$-lattice and return this to a module. Quite recently,
Kim and Lee \cite{Kim2017} presented reduction algorithms for arbitrary
Euclidean domains. Regarding the second scenario whose basis vectors
are in $\mathbb{C}$, the LLL algorithm has also been generalized
to $\mathbb{Z}[i]$-lattices \cite{Gan2009},  $\mathbb{Z}[\omega]$-lattices
\cite{Stern2016}, and lattices from imaginary quadratic fields \cite{japan1,japan2}. 

As a motivation, we notice that a general study on
the reduction of $\mathbb{Z}[\xi]$-lattices, where $\mathbb{Z}[\xi]$
denotes a ring of imaginary quadratic integers, is lacking. Questions that remain unanswered within such lattice reduction include: \textit{What are the behaviors of the fundamental lattice parameters (e.g., Hermite's constant and Minkowski's theorems)? 
Can Gauss's algorithm output successive minima when the two-dimensional lattices are algebraic? Can we properly define algebraic LLL for all types of rings? Which type of rings leads to shorter lattice vectors? How much faster the algebraic algorithms can be?}

In this work, we seek to better understand the characteristics of algebraic
lattices, along with the proper design and performance limits of reduction algorithms. The contributions
of this paper are the following:

i) After presenting the definitions and measures for algebraic lattices,
we analyze the algebraic analogs for the orthogonality defect, Hermite's
factor, and Minkowski's first and second theorems. Furthermore, we extend the definition of lattice reduction from over $\mathbb{Z}$-lattices
to $\mathbb{Z}[\xi]$-lattices, which says that the reduction is to
find a unimodular matrix from the general linear group $\mathrm{GL}_{n}\left(\mathbb{Z}\left[\xi\right]\right)$. The relation between unimodular matrices and independence of linear codes over finite fields is analyzed.

ii) For lattices
of two dimensions, we take a modest step to investigate algebraic Gauss's
algorithm. When the ring of integers
is a Euclidean domain, we prove that Gauss's algorithm returns a basis
corresponding to the successive minima of an algebraic lattice. This
result is further explained through numerical examples. Specifically,
we show how the algorithm finds the two successive minima when the
domain is Euclidean, and how the algorithm fails to work when it is
non-Euclidean.

iii) For higher dimensional lattices, we investigate the algebraic version of the celebrated LLL algorithm. By
analyzing Siegel's condition and the covering radiuses of rings in complex quadratic fields, 
the lower bound of Lov\'asz's parameter $\delta$ is derived. To ensure the algorithm is convergent, we show the rings have to be Euclidean.
Although we can always transform a
$\mathbb{Z}[\xi]$-lattice to a $\mathbb{Z}$-lattice and perform conventional LLL reduction, algebraic LLL algorithms are less time-consuming, and we also have a better reduction factor when rounding over the Eisenstein integers. Moreover, conventional LLL additionally needs a time consuming process to return the $2n$ real vectors to $n$ linear independent vectors (if the application requires).
 
iv) We numerically verify the efficiency of algebraic LLL via applications to both wireless communications and cryptanalysis. Reduction is performed over bases from  compute-and-forward  \cite{FSK13,Sun2013,Tunali2015,jerry2018,yuan16}, lattice-reduction-aided \cite{park11,park12,ahmad13,wkma14} and integer-forcing equalization \cite{Zhan2014IT,Fischer19}, and extended NTRU cryptosystems \cite{GNTRU06,ETRU15}. The algebraic algorithms save at least $50\%$ of the running time. In addition, we demonstrate how to cryptanalyze the celebrated NTRU crypto system \cite{HoffsteinPS98,StehleS11} based on quadratic subfields of cyclotomic fields. 
 
The rest of this paper is organized as follows. In  Section \ref{sec:PreliminariesITW}, backgrounds about 
quadratic fields and algebraic lattices are reviewed, and the concept of algebraic lattice reduction is induced.
The definitions and properties of algebraic Gauss reduction and algebraic LLL reduction  are presented
in Sections \ref{alGaussSection} and \ref{sec:LLL-over-modules}, respectively. In Section \ref{sec:app-lr-cf},
we present numerical results.  Concluding remarks are given in the last section.

Notations: Matrices and column vectors are denoted by uppercase and lowercase
boldface letters, respectively.   The real and imaginary parts
of a complex number are denoted as $\mathfrak{R}\left(\cdot\right)$ and $\mathfrak{I}\left(\cdot\right)$.
$\mathbb{Z}$, $\mathbb{Z}\left[i\right]$, $\mathbb{\mathbb{Z}\left[\omega\right]}$, $\mathbb{Q}$, $\mathbb{R}$ and $\mathbb{C}$ are used to denote
the set of integers, Gaussian integers, Eisenstein integers, rational, real, and complex numbers, respectively.  $\mathbb{F}_{p}$ denotes a finite field of size $p$.  $\left(\cdot\right)^{\dagger}$ refers to the conjugate (transpose) of either a scalar or a matrix. $|\cdot|^{2}$ and $\left\Vert \cdot\right\Vert ^{2}$
respectively denote Euclidean norm of a scalar and a vector. $\otimes$ refers to Kronecker tensor product.
 $V_{n}$
refers to the volume of a unit ball in $\mathbb{R}^{n}$.  $\mathbb{E}$ denotes expectation.

\section{\label{sec:PreliminariesITW}Algebraic Lattices and Reduction}

Lattice reduction is closely related to number theory. In this subsection,
we review some concepts  that will be used throughout
this paper. We refer readers to \cite{BK:Mollin-ANT,oggier04tutorial} for a more detailed account of algebraic concepts, e.g., the definitions of groups, rings, and fields.

\begin{defn}[Number field]
In mathematics, an algebraic number field (or simply number field) $\mathbb{K}$ is a finite degree  field extension of the field of rational numbers $\mathbb{Q}$.
\end{defn}

\begin{defn}[Quadratic field]
A quadratic field is an algebraic number field $\mathbb{K}$ of degree
$[\mathbb{K}:\mathbb{Q}]=2$ over $\mathbb{Q}$. In particular, we
write $\mathbb{K}=\mathbb{Q}\left(\sqrt{-d}\right)$ where $d\in\mathbb{Z}$
is square free. If $d>0$, we say $\mathbb{Q}\left(\sqrt{-d}\right)$
is an imaginary quadratic field.
\end{defn}

\begin{defn}[Algebraic integer]
An algebraic integer is a complex number which is a root of some
monic polynomial whose coefficients are in $\mathbb{Z}$, where a monic polynomial is a single-variable polynomial in which the nonzero coefficient of highest degree is equal to $1$. 
\end{defn}

The set
of all algebraic integers forms a subring $\mathcal{S}$ of $\mathbb{C}$.
For any number field $\mathbb{K}$, we write $\mathcal{O}_{\mathbb{K}}=\mathbb{K}\cap\mathcal{S}$
and call $\mathcal{O}_{\mathbb{K}}$ the ring of integers of $\mathbb{K}$. The fact that the algebraic integers of $\mathbb{K}$ form a ring is a strong result \cite{oggier04tutorial}.
Regarding the ring of integers of a quadratic field $\mathbb{Q}\left(\sqrt{-d}\right)$,
one has $\mathcal{O}_{\mathbb{K}}=\mathbb{Z}\left[\xi\right]$ where
\[
\xi=\begin{cases}
\sqrt{-d}, & \mathrm{if}\thinspace-d\equiv2,3\thinspace\mod\thinspace4; \mathrm{(Type\,\,I)} \\
\left(1+\sqrt{-d}\right)/2 & \mathrm{if}\thinspace-d\equiv1\thinspace\mod\thinspace4. \mathrm{(Type\,\,II)}
\end{cases}
\]
The two types of rings are respectively referred to as Type I and Type II. E.g., Gaussian integers $\mathbb{Z}\left[i\right] \triangleq \mathbb{Z}\left[\sqrt{-1}\right]$ is a Type I ring, and Eisenstein integers  $\mathbb{\mathbb{Z}\left[\omega\right]}\triangleq \mathbb{Z}\left[{(1+\sqrt{-3})/2}\right]$ is a Type II ring.

\begin{defn}[Euclidean domain] \label{def_euclidean}
	A Euclidean domain is an integral domain which can be endowed with at least one Euclidean function.
	For the ring  $\mathbb{Z}\left[\xi\right]$, a Euclidean function $\phi$ is a map from $\mathbb{Z}\left[\xi\right] \setminus 0 $ to the non-negative integers such that $\phi \left(a\right)\leq \phi\left(ab\right)$ for any nonzero $a,b\in \mathbb{Z}\left[\xi\right]$, and there exist $q$ and $r$ in $\mathbb{Z}\left[\xi\right]$ such that $a=bq+r$ with $r=0$ or $\phi\left(r\right)<\phi\left(b\right)$.
\end{defn}

\begin{defn}[Algebraic norm] 
The algebraic norm function $\mathrm{Nr}(\cdot)$ of an element is $\mathrm{Nr}\left(a+b\xi\right)=\left(a+b\xi\right)\left(a+b\xi\right)^{\dagger}$. 
\end{defn}

Elements
in $\mathbb{Z}\left[\xi\right]$ with norm $\pm1$ are called the
units of $\mathbb{Z}\left[\xi\right]$. Together they form a unit
group denoted by $\mathbb{Z}\left[\xi\right]^{\times}$. E.g., the 
unit group of $\mathbb{Z}\left[i\right]$ is $\mathbb{Z}\left[i\right]^{\times}=\{0,1,i,-i\}$.
If the Euclidean function $\phi$ is defined by the algebraic norm, then the Euclidean domain is called norm-Euclidean \cite{Kim2017}. \textit{For complex quadratic fields, Euclidean norm coincides with algebraic
norm, i.e.,} $\mathrm{Nr}\left(a+b\xi\right)=|a+b\xi|^{2}$. Thus a Euclidean ring of imaginary quadratic integers is also norm-Euclidean.

For imaginary
quadratic fields, we may analytically extend the norm function to
all complex numbers using the absolute value. Moreover, we have $\max_{x\in K}\min_{q\in\mathcal{O}_{K}}|\norm_{K/\mathbb{Q}}(x-q)|=\max_{x\in\mathbb{C}}\min_{q\in\mathcal{O}_{K}}|x-q|^{2}$
as the maximum distance with respect to the absolute value is achieved
at a rational point. 

\begin{defn}[Module]
A $\mathbb{Z}\left[\xi\right]$-module is a set $M$ together with   a binary operation under
which $M$ forms an Abelian group, and an action of $\mathbb{Z}\left[\xi\right]$ on $M$ which 
satisfies the same axioms as those for vector spaces.
\end{defn}

A module may not have a basis (i.e., not free). A subset of $M$ forms a $\mathbb{Z}\left[\xi\right]$-module
basis of $M$ if elements of subset are linearly independent over $\mathbb{Z}\left[\xi\right]$ and if every element in $M$ can be written as a finite
linear combination of elements in the subset.
In this work, we will call any free $\mathbb{Z}\left[\xi\right]$-module an algebraic lattice.

\begin{defn}[Algebraic lattice\footnote{Since $\mathbf{b}_{1},\ldots\thinspace,\mathbf{b}_{n}$ have rank $n$ over  $\mathbb{C}$, they are also linearly independent over $\mathbb{Z}\left[\xi\right]$. The considered $\mathbb{Z}\left[\xi\right]$-modules always have bases, so $\mathbb{Z}\left[\xi\right]$ is not confined to be a principle ideal domain when defining algebraic lattices.}]
A  $\mathbb{Z}\left[\xi\right]$-lattice is a discrete
$\mathbb{Z}\left[\xi\right]$-submodule of $\mathbb{C}^{n}$ that has a basis. Such a rank $n$ lattice $\Lambda^{\mathbb{Z}\left[\xi\right]}(\mathbf{B})$ with basis
  $\mathbf{B}=[\mathbf{b}_{1},\ldots\thinspace,\mathbf{b}_{n}]\in\mathbb{C}^{n\times n}$
can be represented by 
\[
\Lambda^{\mathbb{Z}\left[\xi\right]}(\mathbf{B})=\mathbb{Z}\left[\xi\right]\mathbf{b}_{1}+\mathbb{Z}\left[\xi\right]\mathbf{b}_{2}+\cdots+\mathbb{Z}\left[\xi\right]\mathbf{b}_{n}.
\]
\end{defn}

\begin{defn}[Successive minimum]
\label{def:smpok}The $j$th successive minimum of a $\mathbb{Z}\left[\xi\right]$-lattice
$\Lambda^{\mathbb{Z}\left[\xi\right]}$ is the smallest real number
$r$ such that its embedded $\mathbb{Z}$-lattice through a bijection
$\sigma$ contains $j$ linearly $\mathbb{Z}\left[\xi\right]$-independent
vectors of length at most $r$: $\lambda_{j,\mathbb{Z}\left[\xi\right]}=$
\[
\inf\left\{ r\mathrel{\Big|}\dim\left(\mathrm{span}\left(\sigma^{-1}\left(\sigma\left(\Lambda^{\mathbb{Z}\left[\xi\right]}\right)\cap\mathcal{B}(\mathbf{0},r)\right)\right)\right)\geq j\right\} ,
\]
where $\mathcal{B}(\mathbf{t},r)$ denotes a ball centered at $\mathbf{t}$
with radius $r$, and the $r$ is measured by the complex Euclidean norm.
\end{defn}

The successive minima are hard to compute/approximate for non-algebraic random lattices \cite{Micciancio2002}, and we conjecture that this hardness still looms for algebraic lattices.
The problem of fining the first successive minimum is called the shortest vector problem (SVP), in which lattice reduction is a popular approach to obtain an approximate solution. The definition of SVP in this paper is:
\begin{defn}[SVP]
	Given a  $\mathbb{Z}\left[\xi\right]$-lattice $\Lambda^{\mathbb{Z}\left[\xi\right]}(\mathbf{B})$, 
	find a vector $\mathbf{v}\in \Lambda^{\mathbb{Z}\left[\xi\right]}(\mathbf{B})$, $\mathbf{v}\neq \mathbf{0}$ such that $\|\mathbf{v}\|\leq \|\mathbf{w}\|$ $\forall\, \mathbf{w}\in \Lambda^{\mathbb{Z}\left[\xi\right]}(\mathbf{B})$, $\mathbf{w}\neq \mathbf{0}$.
\end{defn}

\subsection{Hermite's Constant and Orthogonality Defect}

To proceed, we first show the $\mathbb{Z}$-basis (real generator matrix) of lattice $\Lambda^{\mathbb{Z}\left[\xi\right]}\left(\mathbf{B}\right)$
is: 
\begin{align}
\mathbf{B}^{\mathbb{R},\mathbb{Z}\left[\xi\right]}= & \begin{cases}
\left[\begin{array}{cc}
\mathfrak{R}\left(\mathbf{B}\right) & -\sqrt{d}\mathfrak{I}\left(\mathbf{B}\right)\\
\mathfrak{I}\left(\mathbf{B}\right) & \sqrt{d}\mathfrak{R}\left(\mathbf{B}\right)
\end{array}\right]\thinspace\thinspace\thinspace\thinspace\thinspace\mathrm{if}\thinspace\xi=\sqrt{-d};\\
\left[\begin{array}{cc}
\mathfrak{R}\left(\mathbf{B}\right) & \frac{1}{2}\mathfrak{R}\left(\mathbf{B}\right)-\frac{\sqrt{d}}{2}\mathfrak{I}\left(\mathbf{B}\right)\\
\mathfrak{I}\left(\mathbf{B}\right) & \frac{1}{2}\mathfrak{I}\left(\mathbf{B}\right)+\frac{\sqrt{d}}{2}\mathfrak{R}\left(\mathbf{B}\right)
\end{array}\right]\thinspace\thinspace\mathrm{if}\thinspace\xi=\frac{1+\sqrt{-d}}{2}.
\end{cases}\label{eq:geMatrix}
\end{align}
Denote the coefficient of a lattice vector $\mathbf{B}\mathbf{x}$ as  $\mathbf{x}=\mathbf{x}_{a}+\xi\mathbf{x}_{b}\in\mathbb{Z}\left[\xi\right]^{n}$.
If $\xi=\sqrt{-d},\thinspace d>0$, we have 
\begin{align}
&\left(\mathfrak{R}\left(\mathbf{B}\right)+i\mathfrak{I}\left(\mathbf{B}\right)\right)\left(\mathbf{x}_{a}+i\sqrt{d}\mathbf{x}_{b}\right) \nonumber\\
&=\left(\mathfrak{R}\left(\mathbf{B}\right)\mathbf{x}_{a}-\sqrt{d}\mathfrak{I}\left(\mathbf{B}\right)\mathbf{x}_{b}\right)+i\left(\mathfrak{I}\left(\mathbf{B}\right)\mathbf{x}_{a}+\sqrt{d}\mathfrak{R}\left(\mathbf{B}\right)\mathbf{x}_{b}\right);\label{eq:giMatrix}
\end{align}
and if $\xi=\frac{1+\sqrt{-d}}{2},\thinspace d>0$, we have 
\begin{align}
  & \left(\mathfrak{R}\left(\mathbf{B}\right)+i\mathfrak{I}\left(\mathbf{B}\right)\right)\left(\mathbf{x}_{a}+\frac{1}{2}\mathbf{x}_{b}+i\frac{\sqrt{d}}{2}\mathbf{x}_{b}\right)\nonumber \\
 & =\left(\mathfrak{R}\left(\mathbf{B}\right)\mathbf{x}_{a}+\left(\frac{1}{2}\mathfrak{R}\left(\mathbf{B}\right)-\frac{\sqrt{d}}{2}\mathfrak{I}\left(\mathbf{B}\right)\right)\mathbf{x}_{b}\right) \nonumber \\ &+i\left(\mathfrak{I}\left(\mathbf{B}\right)\mathbf{x}_{a}+\left(\frac{1}{2}\mathfrak{I}\left(\mathbf{B}\right)+\frac{\sqrt{d}}{2}\mathfrak{R}\left(\mathbf{B}\right)\right)\mathbf{x}_{b}\right).\label{eq:gwmatrix}
\end{align}
Define the function $\Psi:\mathbb{C}^{n}\rightarrow\mathbb{R}^{2n}$ that maps
the complex vector $\left[v_{1},\ldots\thinspace,v_{n}\right]^{\top}$
to the real vector: 
\[
\left[\mathfrak{R}\left(v_{1}\right),\ldots\thinspace,\mathfrak{R}\left(v_{n}\right),\mathfrak{I}\left(v_{1}\right),\ldots\thinspace,\mathfrak{I}\left(v_{1}\right)\right]^{\top}.
\]
By applying the mapping function $\Psi\left(\cdot\right)$ to Eqs.
(\ref{eq:giMatrix}) and (\ref{eq:gwmatrix}) for a lattice point $\mathbf{B}\mathbf{x}\in\Lambda^{\mathbb{Z}\left[\xi\right]}$, we obtain $\Psi\left(\mathbf{B}\mathbf{x}\right)=\mathbf{B}^{\mathbb{R},\mathbb{Z}\left[\xi\right]}\left[\mathbf{x}_{a},\mathbf{x}_{b}\right]^{\top}$,
where the expression for $\mathbf{B}^{\mathbb{R},\mathbb{Z}\left[\xi\right]}$
is given in (\ref{eq:geMatrix}).

According to Eq. (\ref{eq:geMatrix}), the generator matrix of $\Lambda^{\mathbb{Z}\left[\xi\right]}\left(\mathbf{B}\right)$
is related to that of the $\mathbb{Z}\left[i\right]$-lattice $\Lambda^{\mathbb{Z}\left[i\right]}\left(\mathbf{B}\right)$
via 
\begin{equation}
\mathbf{B}^{\mathbb{R},\mathbb{Z}\left[\xi\right]}=\mathbf{B}^{\mathbb{R},\mathbb{Z}\left[i\right]}\left(\Phi^{\mathbb{Z}\left[\xi\right]}\otimes\mathbf{I}_{n}\right),\label{eq:zizlRelation}
\end{equation}
where 
\begin{align}
\Phi^{\mathbb{Z}\left[\xi\right]} \triangleq & \begin{cases}
\left[\begin{array}{cc}
1 & 0\\
0 & \sqrt{d}
\end{array}\right]\thinspace\thinspace\thinspace\thinspace\thinspace\mathrm{if}\thinspace\xi=\sqrt{-d};\\
\left[\begin{array}{cc}
1 & \frac{1}{2}\\
0 & \frac{\sqrt{d}}{2}
\end{array}\right]\thinspace\thinspace\mathrm{if}\thinspace\xi=\frac{1+\sqrt{-d}}{2};
\end{cases}\label{eq:geMatrixPhi}
\end{align}
is referred to as the generator
matrix of $\mathbb{Z}\left[\xi\right]$ in $\mathbb{R}^{2}$. It follows
that we can define the volume of an algebraic lattice as 
\begin{align}
\mathrm{Vol}\left(\Lambda^{\mathbb{Z}\left[\xi\right]}\right) &\triangleq\mathrm{Vol}\left(\Lambda^{\mathbb{Z}}\left(\mathbf{B}^{\mathbb{R},\mathbb{Z}\left[\xi\right]}\right)\right)\nonumber \\&=\det\left(\mathbf{B}^{\mathbb{R},\mathbb{Z}\left[i\right]}\right)\det\left(\Phi^{\mathbb{Z}\left[\xi\right]}\right)^{n},\label{eq:volume}
\end{align}
where $\det\left(\mathbf{B}^{\mathbb{R},\mathbb{Z}\left[i\right]}\right)=|\det\left(\mathbf{B}\right)|^{2}$
denotes the volume of lattice $\Lambda^{\mathbb{Z}}\left(\mathbf{B}^{\mathbb{R},\mathbb{Z}\left[i\right]}\right)$. 

With the definition of volumes,  we extend the definition of Hermite's constant to an analogous constant for algebraic lattices. Previously, the supremum of $\lambda_{1}^{2}(\Lambda^{\mathbb{Z}})/\mathrm{Vol}\left(\Lambda^{\mathbb{Z}}\right)^{2/n}$
for all rank $n$ $\mathbb{Z}$-lattices $\Lambda^{\mathbb{Z}}$ is
often denoted by $\gamma_{n}$ and called \textit{Hermite's constant} \cite{Nguyen2010}. 
\begin{defn}[Algebraic Hermite's constant, \cite{Nguyen2010}]
\label{def:hermite} We denote by $\gamma_{n}^{\mathbb{Z}\left[\xi\right]}$
and call the supremum of ${\lambda_{1}^{2}\left(\Lambda^{\mathbb{Z}\left[\xi\right]}\right)}/{\mathrm{Vol}\left(\Lambda^{\mathbb{Z}\left[\xi\right]}\right)^{1/n}}$
for all rank $n$ $\mathbb{Z}[\xi]$-lattices $\Lambda^{\mathbb{Z}\left[\xi\right]}$
algebraic Hermite's constant. 
\end{defn}

Obviously an algebraic lattice $\Lambda^{\mathbb{Z}\left[\xi\right]}\left(\mathbf{B}\right)$
of dimension $n$ can always be described by a real lattice $\Lambda^{\mathbb{Z}}\left(\mathbf{B}^{\mathbb{R},\mathbb{Z}\left[\xi\right]}\right)$
of dimension $2n$. Moreover, since 
\begin{equation}
{\lambda_{1}^{2}\left(\Lambda^{\mathbb{Z}\left[\xi\right]}\right)}/{\mathrm{Vol}\left(\Lambda^{\mathbb{Z}\left[\xi\right]}\right)^{1/n}}\leq\gamma_{2n},\label{eq:boundGamma}
\end{equation}
we arrive at the following result: 
\[
\gamma_{n}^{\mathbb{Z}\left[\xi\right]} \leq \gamma_{2n}\leq4\left(V_{2n}^{-1/n}\right)
\]
for all positive integers $n$, in which the last inequality is from
\cite{Nguyen2010}. This upper bound
behaves independently of the chosen ring $\mathbb{Z}\left[\xi\right]$. The actual \textit{Hermite's factor},
${\lambda_{1}^{2}\left(\Lambda^{\mathbb{Z}\left[\xi\right]}\right)}/{\mathrm{Vol}\left(\Lambda^{\mathbb{Z}\left[\xi\right]}\right)^{1/n}}$,
however depends on   $\mathbb{Z}\left[\xi\right]$. 
In Fig.
\ref{fig1_lambda1}, we plot the empirical cumulative distribution
functions (CDFs) of a 2-D $\mathbb{Z}\left[\xi\right]$-lattice, where
entries of the complex basis generated from a complex Gaussian distribution
$\mathcal{CN}(0,1)$. It is known that $\gamma_{4}=\sqrt{2}$ \cite{Nguyen2010},
so this serves as the upper bound in the plot. The figure shows that
a ring with a smaller $\det\left(\Phi^{\mathbb{Z}\left[\xi\right]}\right)$
has a larger Hermite's factor on the average.
\begin{figure}
	\center
	
	\includegraphics[clip,width=0.4\textwidth]{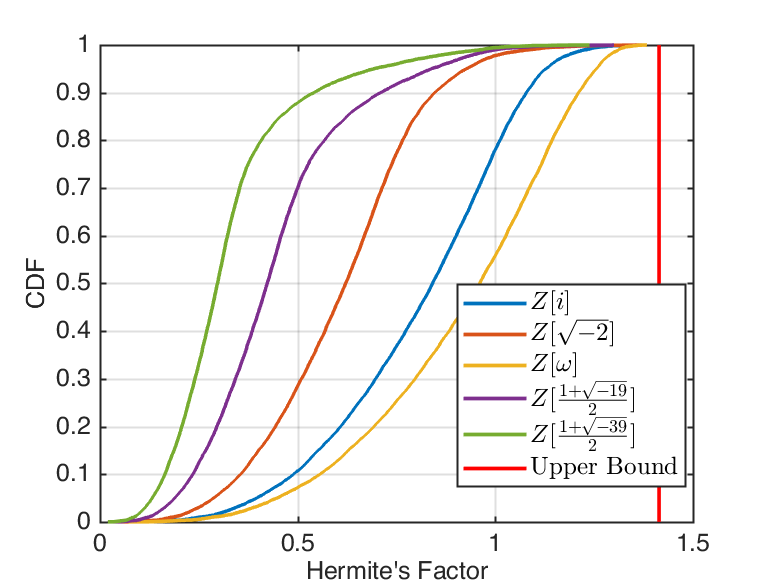}
	
	\protect\caption{The empirical cumulative distribution functions of Hermite's factor
		${\lambda_{1}^{2}\left(\Lambda^{\mathbb{Z}\left[\xi\right]}\right)}/{\mathrm{Vol}\left(\Lambda^{\mathbb{Z}\left[\xi\right]}\right)^{1/n}}$
		in 2-D lattices. }
	\label{fig1_lambda1} 
\end{figure}

Similarly, we introduce the \textit{orthogonality defect}  for algebraic
lattices: 
\begin{equation}
\eta_{\mathbb{Z}\left[\xi\right]}(\mathbf{B})\triangleq\frac{\prod_{j=1}^{n}\left\Vert \mathbf{b}_{j}\right\Vert }{\mathrm{Vol}\left(\Lambda^{\mathbb{Z}\left[\xi\right]}\right)},\label{eq:OD}
\end{equation}
which quantifies how close the basis is to being ``orthogonal''.
For a $\mathbb{Z}\left[i\right]$-lattice, its lower bound is $\eta_{\mathbb{Z}\left[i\right]}(\mathbf{B})\geq1$
according to Hadamard's inequality. More generally, it follows from
Eq. (\ref{eq:volume}) that 
\[
\eta_{\mathbb{Z}\left[\xi\right]}(\mathbf{B})\geq\det\left(\Phi^{\mathbb{Z}\left[\xi\right]}\right)^{-n}.
\]
The volume of a lattice is fixed, so the smallest $\eta_{\mathbb{Z}\left[\xi\right]}(\mathbf{B})$
is achieved only when each $\left\Vert \mathbf{b}_{j}\right\Vert $
is minimized.

\subsection{Minkowski's Theorems}

Minkowski's first and second theorems are crucial for analyzing the
performance of a lattice reduction algorithm. These theorems over
$\mathbb{Z}$-lattices are well known. For algebraic lattices where
the bases may not belong to a number field, we need the following
theorem: 
\begin{thm}[{{Minkowski's first and second theorems over $\mathbb{Z}\left[\xi\right]$-lattices}}]
\label{thm:MinThms}For a $\mathbb{Z}\left[\xi\right]$-lattice $\Lambda^{\mathbb{Z}\left[\xi\right]}\left(\mathbf{B}\right)$
with basis $\mathbf{B}\in\mathbb{C}^{n\times n}$, it satisfies 
\begin{equation}
\lambda_{1,\mathbb{Z}\left[\xi\right]}^{2}\leq\gamma_{2n}\left|\det\left(\Phi^{\mathbb{Z}\left[\xi\right]}\right)\right||\det\left(\mathbf{B}\right)|^{2/n};\label{eq:MFTmo}
\end{equation}
\begin{equation}
\prod_{j=1}^{n}\lambda_{j,\mathbb{Z}\left[\xi\right]}^{2}\leq\gamma_{2n}^{n}\left|\det\left(\Phi^{\mathbb{Z}\left[\xi\right]}\right)\right|^{n}|\det\left(\mathbf{B}\right)|^{2}.\label{eq:MSTmo}
\end{equation}
\end{thm}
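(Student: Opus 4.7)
The plan is to reduce both inequalities to the classical Minkowski theorems applied to the $2n$-dimensional $\mathbb{Z}$-lattice $\Lambda^{\mathbb{R}}:=\Lambda^{\mathbb{Z}}(\mathbf{B}^{\mathbb{R},\mathbb{Z}[\xi]})$, whose volume by (\ref{eq:volume}) equals $|\det(\mathbf{B})|^{2}\det(\Phi^{\mathbb{Z}[\xi]})^{n}$. Under the embedding $\Psi$, the sets of lattice vectors and their Euclidean norms are preserved; only the sense of linear independence differs between $\Lambda^{\mathbb{Z}[\xi]}$ and $\Lambda^{\mathbb{R}}$.

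For (\ref{eq:MFTmo}), I would first observe that a single nonzero lattice vector is automatically $\mathbb{Z}[\xi]$-linearly independent, so $\lambda_{1,\mathbb{Z}[\xi]}=\lambda_{1}(\Lambda^{\mathbb{R}})$. Applying the classical Minkowski first theorem $\lambda_{1}^{2}(\Lambda^{\mathbb{R}})\leq\gamma_{2n}\mathrm{Vol}(\Lambda^{\mathbb{R}})^{1/n}$ and substituting the volume formula gives
\[
\lambda_{1,\mathbb{Z}[\xi]}^{2}\leq\gamma_{2n}\bigl(|\det(\mathbf{B})|^{2}\det(\Phi^{\mathbb{Z}[\xi]})^{n}\bigr)^{1/n}=\gamma_{2n}|\det(\Phi^{\mathbb{Z}[\xi]})|\,|\det(\mathbf{B})|^{2/n},
\]
which is exactly (\ref{eq:MFTmo}).

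For (\ref{eq:MSTmo}), the crucial step is the rank-comparison
\[
\lambda_{j,\mathbb{Z}[\xi]}\leq\lambda_{2j-1}(\Lambda^{\mathbb{R}}),\qquad j=1,\ldots,n.
\]
To establish this, take $2j-1$ $\mathbb{Z}$-linearly independent vectors of $\Lambda^{\mathbb{R}}$ of norm at most $r$; regarded as elements of $\Lambda^{\mathbb{Z}[\xi]}$, they generate a $\mathbb{Z}[\xi]$-submodule whose $\mathbb{Z}$-rank is $2j-1$. Since any $\mathbb{Z}[\xi]$-submodule of $\mathbb{Z}[\xi]^{n}$ has $\mathbb{Z}$-rank equal to twice its $\mathbb{Z}[\xi]$-rank (because $\mathbb{Z}[\xi]$ is free of rank $2$ over $\mathbb{Z}$), the $\mathbb{Z}[\xi]$-rank must be at least $\lceil(2j-1)/2\rceil=j$, and hence $j$ of the original vectors can be chosen to be $\mathbb{Z}[\xi]$-linearly independent.

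Combining this comparison with the monotonicity $\lambda_{2j-1}\leq\lambda_{2j}$ gives
\[
\prod_{j=1}^{n}\lambda_{j,\mathbb{Z}[\xi]}^{2}\leq\prod_{j=1}^{n}\lambda_{2j-1}(\Lambda^{\mathbb{R}})\,\lambda_{2j}(\Lambda^{\mathbb{R}})=\prod_{i=1}^{2n}\lambda_{i}(\Lambda^{\mathbb{R}}),
\]
after which I would invoke the classical Minkowski second theorem in the form $\prod_{i=1}^{m}\lambda_{i}(\Lambda)\leq\gamma_{m}^{m/2}\mathrm{Vol}(\Lambda)$ with $m=2n$, together with the volume formula, to conclude (\ref{eq:MSTmo}). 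The main obstacle is the rank-comparison step: care is needed to handle the odd index $2j-1$ and to invoke the $\mathbb{Z}$-rank versus $\mathbb{Z}[\xi]$-rank doubling identity, which ultimately is what allows a bound on $n$ algebraic successive minima to be extracted from the $2n$ real ones.
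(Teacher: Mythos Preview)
Your argument is correct and follows the same overall strategy as the paper---reduce to the embedded $2n$-dimensional $\mathbb{Z}$-lattice and invoke the classical Minkowski theorems---but your treatment of the second inequality is organized differently and is arguably cleaner. The paper reaches the same intermediate bound $\prod_{j=1}^{n}\lambda_{j,\mathbb{Z}[\xi]}^{2}\le\prod_{i=1}^{2n}\lambda_i(\Lambda^{\mathbb{R}})$ by taking the actual vectors realizing the $2n$ real successive minima and constructing an explicit partition into two size-$n$ groups, one of which is $\mathbb{Z}[\xi]$-independent; it then compares the norms within each pair. Your route bypasses this partition entirely by proving the pointwise comparison $\lambda_{j,\mathbb{Z}[\xi]}\le\lambda_{2j-1}(\Lambda^{\mathbb{R}})$ from the rank-doubling fact, and then using $\lambda_{2j-1}\le\lambda_{2j}$. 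This is more transparent and avoids the somewhat informal ``algorithmic'' pairing step in the paper.

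One small wording issue: you write that the $2j-1$ vectors ``generate a $\mathbb{Z}[\xi]$-submodule whose $\mathbb{Z}$-rank is $2j-1$''. Since the $\mathbb{Z}$-rank of any $\mathbb{Z}[\xi]$-submodule is even, what you mean is that the $\mathbb{Z}$-module they span already has rank $2j-1$, hence the $\mathbb{Z}[\xi]$-module they generate has $\mathbb{Z}$-rank at least $2j-1$, forcing $\mathbb{Z}[\xi]$-rank $\ge j$. Your conclusion via $\lceil(2j-1)/2\rceil=j$ is correct; just tighten that sentence.
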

\begin{IEEEproof}
Minkowski's first theorem is a direct consequence of (\ref{eq:boundGamma}).
To obtain Minkowski's second theorem for $\mathbb{Z}\left[\xi\right]$-lattices,
the rationale is to apply its classic version \cite{Lekkerkerker1987}
to the embedded $\mathbb{Z}$-lattice and inspect the independence
of lattice vectors over the ring $\mathbb{Z}\left[\xi\right]$. Based
on Eq. (\ref{eq:zizlRelation}), applying the real Minkowski's second
theorem \cite{Lekkerkerker1987} yields 
\[
\prod_{j=1}^{2n}\lambda_{j}^{2}\leq\gamma_{2n}^{2n}\det\left(\mathbf{B}^{\mathbb{R},\mathbb{Z}\left[\xi\right]}\right)^{2},
\]
 where $\lambda_{j}$ denotes the $j$th successive minimum of lattice
$\Lambda\left(\mathbf{B}^{\mathbb{R},\mathbb{Z}\left[\xi\right]}\right)$.
Substitute Eq. (\ref{eq:volume}) into the above equation, we have
\begin{equation}
\prod_{j=1}^{2n}\lambda_{j}^{2}\leq\gamma_{2n}^{2n}\det\left(\mathbf{B}^{\mathbb{R},\mathbb{Z}\left[i\right]}\right)^{2}\det\left(\Phi^{\mathbb{Z}\left[\xi\right]}\right)^{2n}.\label{eq:realMSP}
\end{equation}
Let the $2n$ successive minima of $\mathcal{L}\left(\mathbf{B}^{\mathbb{R},\mathbb{Z}\left[\xi\right]}\right)$
be $\left\Vert \mathbf{\mathbf{B}^{\mathbb{R},\mathbb{Z}\left[\xi\right]}}\mathbf{x}_{1}\right\Vert ,\ldots\thinspace,\left\Vert \mathbf{\mathbf{B}^{\mathbb{R},\mathbb{Z}\left[\xi\right]}}\mathbf{x}_{2n}\right\Vert .$
W.l.o.g., we assume the input basis $\mathbf{B}$ has full rank, then
so does $\mathbf{\mathbf{B}^{\mathbb{R},\mathbb{Z}\left[\xi\right]}}$.
For any index $j,\thinspace j'$, 
$  \dim\left(\mathrm{span}_{\mathbb{Z}\left[\xi\right]}\left(\sigma^{-1}\left(\mathbf{\mathbf{B}^{\mathbb{R},\mathbb{Z}\left[\xi\right]}}\mathbf{x}_{j},\mathbf{\mathbf{B}^{\mathbb{R},\mathbb{Z}\left[\xi\right]}}\mathbf{x}_{j'}\right)\right)\right)  =\dim\left(\mathrm{span}_{\mathbb{Z}\left[\xi\right]}\left(\sigma^{-1}\left(\mathbf{x}_{j},\mathbf{x}_{j'}\right)\right)\right).$
Since the coefficients of the successive minima satisfy $\dim\left(\mathrm{span}_{\mathbb{R}}\left(\mathbf{x}_{1},\dots\thinspace,\mathbf{x}_{2n}\right)\right)=2n$,
it yields
\[
\dim\left(\mathrm{span}_{\mathbb{Z}\left[\xi\right]}\left(\sigma^{-1}\left(\mathbf{x}_{1},\dots\thinspace,\mathbf{x}_{2n}\right)\right)\right)=n.
\]

We can design an algorithm to partition $\mathbf{x}_{1},\dots\thinspace,\mathbf{x}_{2n}$
into two groups, each with size $n$. Firstly, note that there exists
an index set $S$ with $|S|=n$ such that $\dim\left(\mathrm{span}_{\mathbb{Z}\left[\xi\right]}\left(\sigma^{-1}\left(\mathbf{x}_{S(1)},\dots\thinspace,\mathbf{x}_{S(n)}\right)\right)\right)=n$.
Secondly, starting from $\mathbf{x}_{S(1)},$ we search for one candidate
in $\mathbf{x}_{[2n]\backslash S}$ in each round, noted as $\mathbf{x}_{S(1)}'$
such that $\left(\mathbf{x}_{S(1)}\right)^{\dagger}\mathbf{x}_{S(1)}'\neq0$.
This procedure continues until all $\mathbf{x}_{[2n]}$ have been
partitioned. It follows from Definition \ref{def:smpok} that $\forall j\in 1,\ldots,n$,
\begin{equation}
\left\Vert \sigma^{-1}\left(\mathbf{\mathbf{B}^{\mathbb{R},\mathbb{Z}\left[\xi\right]}}\mathbf{x}_{S(j)}\right)\right\Vert \leq\left\Vert \sigma^{-1}\left(\mathbf{\mathbf{B}^{\mathbb{R},\mathbb{Z}\left[\xi\right]}}\mathbf{x}_{S(j)}'\right)\right\Vert .\label{eq:all_smaller}
\end{equation}
Based on (\ref{eq:all_smaller}), we have $\prod_{j=1}^{n}\lambda_{j,\mathbb{Z}\left[\xi\right]}^{2}\leq\left(\prod_{j=1}^{2n}\lambda_{j}^{2}\right)^{1/2}.$
Plugging this into (\ref{eq:realMSP}), we have 
\[
\prod_{j=1}^{n}\lambda_{j,\mathbb{Z}\left[\xi\right]}^{2}\leq\gamma_{2n}^{n}\left|\det\left(\mathbf{B}^{\mathbb{R},\mathbb{Z}\left[i\right]}\right)\right|\left|\det\left(\Phi^{\mathbb{Z}\left[\xi\right]}\right)\right|^{n}.
\]

\end{IEEEproof}

\subsection{Definition of Algebraic Lattice Reduction}
A lattice has infinitely many bases.
The process of improving the quality of a given basis by some lattice-preserving
transform is generically called lattice reduction. It is well known
that a transformation matrix should be taken from a set of integer
matrices that are invertible in $\mathbb{Z}$ for a real basis, while
such transforms for a complex basis remain poorly understood. 
Denote $\mathrm{GL}_{n}\left(\mathbb{Z}\left[\xi\right]\right)$ as
the set of invertible matrices in the matrix ring $M_{n\times n}\left(\mathbb{Z}\left[\xi\right]\right)$
and call a matrix in $\mathrm{GL}_{n}\left(\mathbb{Z}\left[\xi\right]\right)$
unimodular. To define algebraic lattice reduction, we introduce a lemma to characterize the ``equivalence'' of two lattice bases. 

\begin{lem}
	Two lattice bases $\mathbf{B}$, $\tilde{\mathbf{B}}$ generate the
	same lattice if and only if there exists a matrix $\mathbf{U}\in\mathrm{GL}_{n}\left(\mathbb{Z}\left[\xi\right]\right)$
	such that $\tilde{\mathbf{B}}=\mathbf{B}\mathbf{U}$. \end{lem}
\begin{IEEEproof}
The proof follows the technique in \cite{Frank2014}.
	First, we show that $\mathbf{B},\tilde{\mathbf{B}}$ generate the
	same lattice if $\tilde{\mathbf{B}}=\mathbf{B}\mathbf{U}$ for a unimodular
	matrix $\mathbf{U}$. Let $\Lambda$ be generated by $\mathbf{B}$
	and let $\tilde{\Lambda}$ be generated by $\tilde{\mathbf{B}}$.
	Any element $\tilde{\mathbf{b}}\in\tilde{\Lambda}$ can be written
	as 
	\[
	\tilde{\mathbf{b}}=\tilde{\mathbf{B}}\mathbf{x}=\mathbf{B}\mathbf{U}\mathbf{x}\in\Lambda,
	\]
	for some $\mathbf{x}\in\mathbb{Z}\left[\xi\right]^{n}$, which shows
	that $\tilde{\Lambda}\subseteq\Lambda$ since $\mathbf{U}\mathbf{x}\in\mathbb{Z}\left[\xi\right]^{n}$.
	On the other hand, if $\mathbf{U}$ is invertible, we have $\mathbf{B}=\tilde{\mathbf{B}}\mathbf{U}^{-1}$
	and a similar argument shows that $\Lambda\subseteq\tilde{\Lambda}$.
	Now we show the invertible condition is $\det\left(\mathbf{U}\right)\in\mathbb{Z}\left[\xi\right]^{\times}$.
	Note that if a ring $\mathbb{Z}\left[\xi\right]$ is from a complex
	quadratic field, then it is commutative (a non-commutative example
	is the matrix ring). For any matrix $\mathbf{U}\in\mathbb{Z}\left[\xi\right]^{n\times n}$,
	it follows from Cramer's rule that $\mathbf{U}^{-1}=\left(\det\left(\mathbf{U}\right)\right)^{-1}\mathrm{adj}\left(\mathbf{U}\right)$
	, where $\mathrm{adj}\left(\mathbf{U}\right)\in\mathbb{Z}\left[\xi\right]^{n\times n}$,
	the adjugate of $\mathbf{U}$ is given by $\left[\mathrm{adj}\left(\mathbf{U}\right)\right]_{j',j}=\left(-1\right)^{j'+j}\mathbf{M}_{j,j'}$,
	where $\mathbf{M}_{j,j'}$ is the minor of $\mathbf{U}$ obtained
	by deleting the $j$th row and the $j'$th column of $\mathbf{U}$.
	Clearly, matrix $\mathbf{U}$ is invertible in $\mathbb{Z}\left[\xi\right]^{n\times n}$
	if and only if $\det\left(\mathbf{U}\right)\in\mathbb{Z}\left[\xi\right]^{\times}$,
	such that $\mathbf{U}^{-1}\in\mathbb{Z}\left[\xi\right]^{n\times n}$.
	
	Second, we show that $\tilde{\mathbf{B}}=\mathbf{B}\mathbf{U}$ for
	a unimodular matrix $\mathbf{U}$ if $\mathbf{B},\tilde{\mathbf{B}}$
	generate the same lattice. Based on the ``if'' condition, there
	are some full-rank transforms $\mathbf{U}_{1}$ and $\mathbf{U}_{2}$
	in $\mathbb{Z}\left[\xi\right]^{n\times n}$ such that $\tilde{\mathbf{B}}=\mathbf{B}\mathbf{U}_{1},\thinspace\mathbf{B}=\tilde{\mathbf{B}}\mathbf{U}_{2}$
	and hence $\tilde{\mathbf{B}}=\tilde{\mathbf{B}}\mathbf{U}_{2}\mathbf{U}_{1}$.
	This implies $\mathbf{U}_{2}\mathbf{U}_{1}$ is an identity matrix.
	As the determinant function is distributive, we have $\det\left(\mathbf{U}_{2}\right)\det\left(\mathbf{U}_{1}\right)=1$,
	with $\det\left(\mathbf{U}_{1}\right),\det\left(\mathbf{U}_{2}\right)\in\mathbb{Z}\left[\xi\right]$.
	Thus $\det\left(\mathbf{U}_{1}\right)$ and $\det\left(\mathbf{U}_{2}\right)$
	are a pair of invertible elements in $\mathbb{Z}\left[\xi\right]$,
	and $\mathbf{U}_{1}\in\mathrm{GL}_{n}\left(\mathbb{Z}\left[\xi\right]\right)$,
	$\mathbf{U}_{2}\in\mathrm{GL}_{n}\left(\mathbb{Z}\left[\xi\right]\right)$.
\end{IEEEproof}

The above lemma suggests we can define lattice reduction for
algebraic lattices based on a unimodular transform induced by $\mathrm{GL}_{n}\left(\mathbb{Z}\left[\xi\right]\right)$.
\begin{defn}[Algebraic lattice reduction]
	\label{def:LR}For a given algebraic lattice $\Lambda^{\mathbb{Z}\left[\xi\right]}$
	with basis $\mathbf{B}\in\mathbb{C}^{n\times n}$, find a new basis
	$\tilde{\mathbf{B}}=\mathbf{B}\mathbf{U}$ with shorter basis vectors,
	where $\mathbf{U}\in\mathrm{GL}_{n}\left(\mathbb{Z}\left[\xi\right]\right)$.
\end{defn}

\subsection{Multiple Short Vectors and Independence Over Finite Fields}
Lattice reduction naturally induces unimodular matrices. 
Results in this subsection shows that lattice reduction offers an additional advantage to algebraic lattice network coding for Gaussian multiple-access channel (MAC) \cite{FSK13,Sun2013}, and integer-forcing linear Multiple-Input Multiple-Output (MIMO) detection \cite{Zhan2014IT}. 
  
Consider an algebraic $\mathbb{Z}\left[\xi\right]$-lattice lifted from a linear code over $\mathbb{F}_{p}$. In the decoding process, we define a ring homomorphism \footnote{If $A$ and $B$ are rings, a ring homomorphism \cite{oggier04tutorial} is a map $f:A \rightarrow B$ that satisfies, for all $a,b \in A$, (i) $f(a+b)=f(a)+f(b)$, (ii) $f(a\cdot b)=f(a)\cdot f(b)$,  (iii) $f(1)=1$.}
$f:\thinspace\mathbb{Z}\left[\xi\right]\rightarrow \mathbb{F}_{p}$.
In a practical, non-asymptotic setting, when
lattice coding is considered, the following theorem shows that a unimodular matrix $\mathbf{U}$ can be used to define a network coding matrix that always has full rank over the code space $\mathbb{F}_{p}$. The practical importance of this theorem is further exemplified by an example in Appendix \ref{ADV:UNI}.

\begin{thm} \label{thm-full}
Given a unimodular matrix   $\mathbf{U}\in\mathrm{GL}_{n}(\mathbb{Z}\left[\xi\right])$, the homomorphism of $\mathbf{U}$ in $\mathbb{F}_{p}$ satisfies $\mathrm{rank}\left(f\left(\mathbf{U}\right)\right)=n$.
\end{thm}
\begin{IEEEproof}
	First, the determinant function for measuring ranks defines a mapping
	$\mathrm{GL}_{n}(\mathbb{Z}\left[\xi\right])\rightarrow\mathbb{Z}\left[\xi\right]^{\times}$
	between general linear group over $\mathbb{Z}\left[\xi\right]$ and
	the group of units $\mathbb{Z}\left[\xi\right]^{\times}$. Since it
	respects the multiplication in both groups, the function $\det\left(\cdot\right)$
	defines a group homomorphism. Second, the determinant function respects
	the morphism $f:\mathrm{GL}_{n}(\mathbb{Z}\left[\xi\right])\rightarrow\mathrm{GL}_{n}(\mathbb{F}_{p})$,
	so it yields
	\[
	f\left(\det\left(\mathbf{U}\right)\right)=\det\left(f\left(\mathbf{U}\right)\right).
	\]
	
	\begin{figure}[th]
		\center
		
		\includegraphics[clip,width=2.4in]{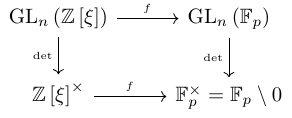}
		
		\protect\caption{The commutative diagram of groups and units. }
		\label{figcmm} 
	\end{figure}
	The composition of morphisms can be represented by the commutative diagram shown in Fig. \ref{figcmm}.
	The non-trivial bottom arrow in the figure holds due to the following reason:
	In a morphism, we have $f(1)=1$ (otherwise we arrive at a contradiction from $f(u \times 1)=f(u) \times f(1)$); thus for a unit $u \in \mathbb{Z}\left[\xi\right]^{\times}$, $f(u \times u^{-1})=f(u) \times f(u^{-1})=1$, which means $f(u)$ has an inverse in $\mathbb{F}_{p}$. This shows the bottom arrow in Fig. \ref{figcmm} holds.
\end{IEEEproof}

\section{\label{alGaussSection}Algebraic Gauss reduction in two dimensions}

Lagrange and Gauss have given the reduction criteria for a two dimensional
real basis. We first generalize this criteria to over complex quadratic
rings. 

\begin{defn} A basis $\mathbf{b}_{1},\mathbf{b}_{2}\in\mathbb{C}^{n}$
	is Gauss reduced if $\|\mathbf{b}_{1}\|\leq\|\mathbf{b}_{2}\|\leq\|\mathbf{b}_{2}+p\mathbf{b}_{1}\|$
	for all $p\in\mathbb{Z}[\xi]$. \end{defn} 

Define the
quantization function $\mathcal{Q}_{\mathbb{Z}\left[\xi\right]}:\mathbb{C}\to\mathbb{Z}[\xi]$ such that
$\mathcal{Q}_{\mathbb{Z}\left[\xi\right]}(x)=\argmin_{\mu\in\mathbb{Z}[\xi]}|x-\mu|$. 
The following algorithm (Algorithm \ref{algeGauss}),
as a special case of algebraic LLL in two dimensions, computes
a Gauss reduced basis.

\begin{algorithm}
	 \KwIn{ basis $\{\mathbf{b}_{1},\mathbf{b}_{2}\}\in\mathbb{C}^{n}$
		of a two dimensional algebraic lattice $\Lambda$, ring
		$\mathbb{Z}[\xi]$ that we want to reduce the basis over.} \KwOut{ reduced basis $\tilde{\mathbf{B}}$.} \BlankLine
	$j=2$\;
	\While{$j\leq 2$}{
		$\mathbf{b}_{2} \leftarrow \mathbf{b}_{2}-\mathcal{Q}_{\mathbb{Z}\left[\xi\right]}\left(\langle\mathbf{b}_{1},\mathbf{b}_{2}\rangle/\|\mathbf{b}_{1}\|^{2}\right)\mathbf{b}_{1}$ \;
      \If{$\|\mathbf{b}_{1}\| > \|\mathbf{b}_{2}\|$}{	swap $\mathbf{b_{1}},\mathbf{b_{2}}$\;}	
      \Else{$j \leftarrow j+1$\;}
      Return $\tilde{\mathbf{B}}=[\mathbf{b}_1,\mathbf{b}_2]$.
}
	\caption{The algebraic Gauss's algorithm.}
	\label{algeGauss}  
\end{algorithm}

\subsection{Performance Characterization}
We begin by introducing the concept of ``fully-reduced''. We say that $x\in\mathbb{C}$ is fully $\mathbb{Z}[\xi]$-reduced
if $|x|\leq|x-q|$ for all $q\in\mathbb{Z}[\xi]$.

\begin{lem} \label{GaussRedlem} Let $x\in\mathbb{C}$ be fully $\mathbb{Z}[\xi]$-reduced.
	Then $|\Re(x)|\leq1/2,|\Im(x)|\leq\sqrt{d}/2$ if $\xi=\sqrt{-d}$
	or $|\Re(x)|\leq1/2$, $|\Im(x)|\leq\frac{1}{\sqrt{d}}\left(-|\Re(x)|+\frac{1+d}{4}\right)$
	if $\xi=\frac{1+\sqrt{-d}}{2}$. \end{lem}

\begin{proof} Define the map $\phi(x+iy)=(x,y)$ for all $x+iy\in\mathbb{C}$.
	Then $|x+iy|=\|(x,y)\|$. When $-d\equiv2,3\mod4$, $\mathbb{Z}[\xi]$
	generates the lattice with basis $(1,0),(0,\sqrt{d})$, otherwise
	$\mathbb{Z}[\xi]$ generates the lattice with basis $(1,0),(1/2,\sqrt{d}/2)$.
	The bounds that form the fundamental region of these lattices correspond
	to the bounds given in the lemma.
\end{proof} 

When the ring of integers
is a Euclidean domain, hereby we prove that Gauss's algorithm returns a basis
corresponding to the successive minima of an algebraic lattice. 

\begin{thm} Let $\tilde{\mathbf{b}}_{1},\tilde{\mathbf{b}}_{2}$
	be an output basis of the algorithm above. Then $\|\tilde{\mathbf{b}}_{1}\|=\lambda_{1},\|\tilde{\mathbf{b}}_{2}\|=\lambda_{2}$
	if $\mathbb{Z}[\xi]$ is the ring of integers of a norm-Euclidean
	domain (i.e., $d=1,2,3,7,11$). \end{thm}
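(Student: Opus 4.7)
The plan is to proceed in two stages: first show that the algorithm terminates with a Gauss-reduced basis, and then show that any rank-two Gauss-reduced basis realises both successive minima whenever $\mathbb{Z}[\xi]$ is norm-Euclidean. Throughout I would rely on the Gram--Schmidt decomposition $\mathbf{b}_2=\mu_{12}\mathbf{b}_1+\mathbf{b}_2^{\ast}$ and the induced identity $\|\mathbf{v}\|^2=|x+y\mu_{12}|^2\|\mathbf{b}_1\|^2+|y|^2\|\mathbf{b}_2^{\ast}\|^2$ for a lattice vector $\mathbf{v}=x\mathbf{b}_1+y\mathbf{b}_2$.

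For stage one, the norm-Euclidean hypothesis forces $|\mu_{12}-\mathcal{Q}_{\mathbb{Z}[\xi]}(\mu_{12})|<1$ after the size-reduction step---Lemma 1 quantifies this by exhibiting a fundamental region of $\mathbb{Z}[\xi]$ lying inside the open unit disc precisely when $d\in\{1,2,3,7,11\}$. Whenever the outer loop performs a swap and re-enters, the new $\mathbf{b}_1$ is therefore strictly shorter than the previous one; since the lengths of nonzero lattice vectors form a discrete set, the loop terminates in finitely many steps. On exit the ordering $\|\mathbf{b}_1\|\leq\|\mathbf{b}_2\|$ holds, and because $\mathcal{Q}_{\mathbb{Z}[\xi]}$ returns the closest point of $\mathbb{Z}[\xi]$ to $\mu_{12}$, the final $\mu_{12}$ lies in the Voronoi cell of the origin. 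This gives $|\mu_{12}+p|\geq|\mu_{12}|$ for every $p\in\mathbb{Z}[\xi]$; substituting into the Gram--Schmidt identity yields $\|\mathbf{b}_2+p\mathbf{b}_1\|\geq\|\mathbf{b}_2\|$, the remaining Gauss-reducedness condition.

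For stage two, it suffices to show $\|\mathbf{v}\|\geq\|\mathbf{b}_2\|$ for every $\mathbf{v}=x\mathbf{b}_1+y\mathbf{b}_2$ with $y\neq 0$: the pair $(\mathbf{v},\mathbf{b}_1)$ is then $\mathbb{Z}[\xi]$-independent, forcing $\lambda_2\geq\|\mathbf{b}_2\|$, and together with $\lambda_1\geq\|\mathbf{b}_1\|$ from the trivial $y=0$ case this matches the obvious upper bounds. When $y\in\mathbb{Z}[\xi]^{\times}$ is a unit, letting $p=xy^{-1}\in\mathbb{Z}[\xi]$ gives $\mathbf{v}=y(\mathbf{b}_2+p\mathbf{b}_1)$, and Gauss-reducedness closes the case.

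The main obstacle is the subcase where $y$ is a non-unit. I would combine two ring-specific facts, each verifiable by finite inspection over $d\in\{1,2,3,7,11\}$: the minimum norm of a non-unit in $\mathbb{Z}[\xi]$ is at least $2$, so $|y|^2\geq 2$; and Lemma 1 provides the covering-radius bound $|\mu_{12}|^2\leq\rho^2<1$, hence $\|\mathbf{b}_2^{\ast}\|^2\geq(1-\rho^2)\|\mathbf{b}_1\|^2$. The target $\|\mathbf{v}\|^2\geq\|\mathbf{b}_2\|^2$ rearranges to $(|y|^2-1)\|\mathbf{b}_2^{\ast}\|^2\geq(|\mu_{12}|^2-|x+y\mu_{12}|^2)\|\mathbf{b}_1\|^2$, which is immediate when $|x+y\mu_{12}|\geq|\mu_{12}|$ and otherwise follows by lower-bounding $|x+y\mu_{12}|$ through the distance of $y\mu_{12}$ to $\mathbb{Z}[\xi]$ combined with $|y|^2\geq 2$. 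The delicate point is that this estimate becomes essentially tight for the rings with larger covering radius (notably $d=2,7,11$, where $\rho^2\in\{3/4,4/7,9/11\}$), and equality can occur at specific vertex configurations of the Voronoi cell; a sharper direct computation inside the fundamental region, rather than a blanket use of $|\mu_{12}|^2\leq\rho^2$, is required in those borderline cases. Once the non-unit subcase is settled, the equalities $\lambda_1=\|\mathbf{b}_1\|$ and $\lambda_2=\|\mathbf{b}_2\|$ follow.
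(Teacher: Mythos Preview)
Your overall architecture is sound and the reduction to the single claim ``$\|\mathbf v\|\geq\|\mathbf b_2\|$ whenever $y\neq 0$'' is clean. However, the non-unit subcase is not actually closed. Your own computation shows why: with $|y|^2\geq 2$ and $\|\mathbf b_2^{\ast}\|^2\geq(1-|\mu_{12}|^2)\|\mathbf b_1\|^2$, the desired inequality reduces to $(|y|^2-1)(1-|\mu_{12}|^2)+|x+y\mu_{12}|^2\geq|\mu_{12}|^2$, and for $d=11$ the worst case ($|\mu_{12}|^2=9/11$, $|y|^2=3$) demands $|x+y\mu_{12}|^2\geq 5/11$. That bound is \emph{exactly} attained at a Voronoi vertex, so the inequality survives---but establishing this requires enumerating the deep holes, the small non-units, and the nearest ring elements to each product $y\mu_{12}$, for every $d\in\{1,2,7,11\}$ separately. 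You defer this to ``a sharper direct computation'' without supplying it; until those finitely many extremal configurations are checked, the proof is incomplete precisely for the rings that make the theorem interesting.

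The paper takes a quite different route that sidesteps this case split entirely. It expands $p_1,p_2\in\mathbb{Z}[\xi]$ into four real integer coordinates $(x,y,z,w)$, uses the fully-reduced bounds on $\Re\langle\mathbf b_1,\mathbf b_2\rangle$ and $\Im\langle\mathbf b_1,\mathbf b_2\rangle$ from Lemma~1 to obtain $\|\mathbf v\|^2\geq Q(x,y,z,w)\|\mathbf b_1\|^2$ for an explicit integer quadratic form $Q$, and then verifies positive-definiteness of the associated $4\times 4$ symmetric matrix by computing its eigenvalues as functions of $d$. This yields $\|\mathbf b_1\|=\lambda_1$ directly; the bound $\|\mathbf b_2\|=\lambda_2$ then follows by the rearrangement $\|p_1\mathbf b_1+p_2\mathbf b_2\|^2=|p_2|^2(\|\mathbf b_2\|^2-\|\mathbf b_1\|^2)+Q(x,y,z,w)\|\mathbf b_1\|^2$. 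The trade-off: the paper's method is less geometric but uniform, reducing everything to an eigenvalue check that handles all five rings at once, whereas your Gram--Schmidt approach is conceptually cleaner for the unit case but forces a delicate ring-by-ring endgame you have not carried out.
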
 

\begin{proof}
First, the GS coefficients of the output basis, $\mu_{12}=\langle\tilde{\mathbf{b}}_{1},\tilde{\mathbf{b}}_{2}\rangle/\|\tilde{\mathbf{b}}_{1}\|^{2}$ and $\mu_{21}=\langle\tilde{\mathbf{b}}_{1},\tilde{\mathbf{b}}_{2}\rangle/\|\tilde{\mathbf{b}}_{2}\|^{2}$,  are both rounded to zero after the termination of the algorithm.
	 Since no
	swap occurs in the final round, we have $\mathcal{Q}_{\mathbb{Z}\left[\xi\right]}(\mu_{12})=0$ so as to ensure $\|\tilde{\mathbf{b}}_{1}\|\leq\|\tilde{\mathbf{b}}_{2}\|$. Since $\tilde{\mathbf{b}}_{1}$ has been reduced before the final round,  $\mathcal{Q}_{\mathbb{Z}\left[\xi\right]}(\mu_{21})=0$.
	
	To prove that $\|\tilde{\mathbf{b}}_{1}\|=\lambda_{1}$, we denote an arbitrary
	lattice vector $\mathbf{v}=p_{1}\tilde{\mathbf{b}}_{1}+p_{2}\tilde{\mathbf{b}}_{2}$
	where $p_{1},p_{2}\in\mathbb{Z}[\xi]$, and analyze its norm function:
	\begin{equation}
	\|\mathbf{v}\|^{2}=|p_{1}|^{2}\|\tilde{\mathbf{b}}_{1}\|^{2}+|p_{2}|^{2}\|\tilde{\mathbf{b}}_{2}\|^{2}+2\Re(p_{1}^\dagger p_{2}\langle\tilde{\mathbf{b}}_{1},\tilde{\mathbf{b}}_{2}\rangle).\label{prove1_v1}
	\end{equation}
	
	We examine the cases $-d\equiv1,2\mod4$ and $-d\equiv3\mod4$ separately.
	When the chosen ring is in the form of Type I  (i.e., $\xi=\sqrt{-d}$), we let $p_{1}=x+y\sqrt{-d},p_{2}=z+w\sqrt{-d}$
	where $x,y,z,w\in\mathbb{Z}$. Then $p_{1}^\dagger p_{2}=(xz+dyw)+\sqrt{-d}(xw-yz)$,
	and  $	2\Re(p_{1}^\dagger p_{2}\langle\tilde{\mathbf{b}}_{1},\tilde{\mathbf{b}}_{2}\rangle)  =2(xz+dyw)\Re(\langle\tilde{\mathbf{b}}_{1},\tilde{\mathbf{b}}_{2}\rangle)
	-2\sqrt{d}(xw-yz)\Im(\langle\tilde{\mathbf{b}}_{1},\tilde{\mathbf{b}}_{2}\rangle).$
	Since the GS coefficients are fully reduced, we have: 
	\[
	\begin{cases}
	2(xz+dyw)\Re(\langle\tilde{\mathbf{b}}_{1},\tilde{\mathbf{b}}_{2}\rangle)\geq-|xz+dyw|\|\tilde{\mathbf{b}}_{1}\|^{2},\\
	-2\sqrt{d}(xw-yz)\Im(\langle\tilde{\mathbf{b}}_{1},\tilde{\mathbf{b}}_{2}\rangle)\geq-d|xw-yz|\|\tilde{\mathbf{b}}_{1}\|^{2}.
	\end{cases}
	\]
	Based on this, the r.h.s. of Eq. (\ref{prove1_v1}) can be lower bounded:
	\begin{align}
	\|\mathbf{v}\|^{2}\geq Q_{1}'(x,y,z,w)\|\tilde{\mathbf{b}}_{1}\|^{2},\label{eq:lambda_1_case1}
	\end{align}
	where $Q_{1}'(x,y,z,w)$
	\begin{align*}
	 \triangleq(x^{2}+dy^{2}+z^{2}+dw^{2} -|xz+dyw|-d|xw-yz|).
	\end{align*}
	Letting $Q_1(x,y,z,w)\triangleq(x^{2}+dy^{2}+z^{2}+dw^{2} -(xz+dyw)-d(xw-yz))$, we note that the codomain of $Q_1'$ is a subset of the codomain of $Q_1$ (this can be seen by changing the signs of $x,y,z,w$ around until the functions are equivalent), showing positive-definiteness of $Q_1$ immediately yields that $Q_1'$ is also positive-definite. The 4-D symmetric matrix w.r.t. quadratic form $Q_{1}(x,y,z,w)$ can
	be written as 
	\[
	\mathbf{Q}_{1}=\left[\begin{array}{cccc}
	1 & 0 & -\frac{1}{2} & -\frac{d}{2}\\
	0 & d & \frac{d}{2} & -\frac{d}{2}\\
	-\frac{1}{2} & \frac{d}{2} & 1 & 0\\
	-\frac{d}{2} & -\frac{d}{2} & 0 & d
	\end{array}\right].
	\]
	The four eigenvalues of $\mathbf{Q}_{1}$ are: 
	\[
	\begin{cases}
	\frac{d-\sqrt{5d^{2}-6d+9}+3}{4},\\
	\frac{d+\sqrt{5d^{2}-6d+9}+3}{4},\\
	\frac{3d-\sqrt{13d^{2}-6d+1}+1}{4},\\
	\frac{3d+\sqrt{13d^{2}-6d+1}+1}{4}.
	\end{cases}
	\]
	We therefore conclude that $\mathbf{Q}_{1}$ has four positive eigenvalues
	and hence being positive definite with only $d=1,2$ in this case.
	Along with $Q(x,y,z,w)\in\mathbb{Z}$, we arrive at $\|\mathbf{v}\|^{2}\geq\|\tilde{\mathbf{b}}_{1}\|^{2}$
	when $d=1,2$.
	
	When the chosen ring is in the form of Type II (i.e., $\xi=\frac{1+\sqrt{-d}}{2}$), 
	as before, letting $p_{1}=x+y\frac{1+\sqrt{-d}}{2},p_{2}=z+w\frac{1+\sqrt{-d}}{2}$,
	we have $p_{1}^\dagger p_{2}=(xz+1/2(yz+xw)+\frac{1+d}{4}yw)+(\sqrt{-d}/2)(xw-yz)$.
	Then  $	2\Re(p_{1}^\dagger p_{2}\langle\tilde{\mathbf{b}}_{1},\tilde{\mathbf{b}}_{2}\rangle)=  2(xz+1/2(yz+xw) +\frac{1+d}{4}yw)\Re(\langle\tilde{\mathbf{b}}_{1},\tilde{\mathbf{b}}_{2}\rangle) -\sqrt{d}(xw-yz)\Im(\langle\tilde{\mathbf{b}}_{1},\tilde{\mathbf{b}}_{2}\rangle).$
	Using the following inequality from the ``fully-reduced'' constraints:
	$$
	|\Im(x)|\leq\frac{1}{\sqrt{d}}\left(-|\Re(x)|+\frac{1+d}{4}\right),
	$$
	similarly to before, we obtain the inequality
$	\|\mathbf{v}\|^2 \geq (x^2 + xy +\frac{1+d}{4}y^2 + z^2 + zw + \frac{1+d}{4}w^2  - \frac{1+d}{4}|xw-yz|)\|\mathbf{b}_1\|^2 -|\Re(\langle \mathbf{b}_1,\mathbf{b}_2 \rangle)| (|2xz + \frac{1+d}{2}yw + xw +yz| - |xw-yz|).$

	Focusing on the term $(|2xz + \frac{1+d}{2}yw + xw +yz| - |xw-yz|)$, we note that one of the $xw,yz$ on the left hand term must annihilate with one on the right hand term, and one must sum to two times the variable (the choice of which does not matter for our case, as the overall function is symmetric in $xw,yz$). We choose $xw$ to annihilate and $yz$ to coalesce. Then clearly, all terms whose coefficient is $|\Re(\langle \mathbf{b}_1,\mathbf{b}_2 \rangle)|$ are negative, so the minimum is achieved at $|\Re(\langle \mathbf{b}_1,\mathbf{b}_2 \rangle)|=1/2 \|\mathbf{b}_1\|^2$. Now we obtain $\|\mathbf{v}\|^{2}\geq Q_{2}(x,y,z,w)\|\tilde{\mathbf{b}}_{1}\|^{2}$
	with $	Q_{2}(x,y,z,w) \triangleq x^{2}+ xy +\frac{1+d}{4}y^{2}+z^{2}+zw +\frac{1+d}{4}w^{2}-\frac{1+d}{4}xw+\left(\frac{1+d}{4}-1\right)yz  -xz -\frac{1+d}{4}yw.$
	
	The symmetric matrix w.r.t. quadratic form $Q_{2}(x,y,z,w)$ and its
	corresponding eigenvalues are respectively: 
	\[
	\mathbf{Q}_{2}=\left[\begin{array}{cccc}
	1 & 1/2 & -\frac{1}{2} & -\frac{1+d}{8}\\
	1/2 & \frac{1+d}{4} & \frac{1}{2}\left(\frac{1+d}{4}-1\right) & -\frac{1+d}{8}\\
	-\frac{1}{2} & \frac{1}{2}\left(\frac{1+d}{4}-1\right) & 1 & 1/2\\
	-\frac{1+d}{8} & -\frac{1+d}{8} & 1/2 & \frac{1+d}{4}
	\end{array}\right],
	\]
	\[
	\begin{cases}
	\frac{2D+2-\sqrt{9D^{2}-10D^{3}+10-4\frac{D^3-D^2+2}{\sqrt{D^2-2D+2}}}-\sqrt{D^2-2D+2}}{4},\\
	\frac{2D+2+\sqrt{9D^{2}-10D^{3}+10-4\frac{D^3-D^2+2}{\sqrt{D^2-2D+2}}}-\sqrt{D^2-2D+2}}{4},\\
	\frac{2D+2-\sqrt{9D^{2}-10D^{3}+10+4\frac{D^3-D^2+2}{\sqrt{D^2-2D+2}}}+\sqrt{D^2-2D+2}}{4},\\
	\frac{2D+2+\sqrt{9D^{2}-10D^{3}+10+4\frac{D^3-D^2+2}{\sqrt{D^2-2D+2}}}+\sqrt{D^2-2D+2}}{4},
	\end{cases}
	\]
	where $D=\frac{1+d}{4}$. Through checking the eigenvalues, it shows that $\mathbf{Q}_{2}$ is
	positive definite when $d=3,7,11$; therefore $\|\mathbf{v}\|^{2}\geq\|\tilde{\mathbf{b}}_{1}\|^{2}$
	is reached.
	
	To prove that $\|\tilde{\mathbf{b}}_{2}\|=\lambda_{2}$, we leverage the technique
	in \cite{Yao2002}. For both cases of $\xi$, we construct a vector
	$p_{1}\tilde{\mathbf{b}}_{1}+p_{2}\tilde{\mathbf{b}}_{2}$ with $p_{1},p_{2}\in\mathbb{Z}[\xi]$,
	$p_{2}\neq0$. When the chosen ring is in the form of $\xi=\sqrt{-d}$,
	we have $	 \|p_{1}\tilde{\mathbf{b}}_{1}+p_{2}\tilde{\mathbf{b}}_{2}\|^{2} =|p_{2}|^{2}(\|\tilde{\mathbf{b}}_{2}\|^{2}-\|\tilde{\mathbf{b}}_{1}\|^{2})  +(|p_{1}|^{2}+|p_{2}|^{2})\|\tilde{\mathbf{b}}_{1}\|^{2}+2\Re(p_{1}^\dagger p_{2}\langle\tilde{\mathbf{b}}_{1},\tilde{\mathbf{b}}_{2}\rangle)$
	\begin{align*}
	& \geq|p_{2}|^{2}(\|\tilde{\mathbf{b}}_{2}\|^{2}-\|\tilde{\mathbf{b}}_{1}\|^{2})+Q_{1}(x,y,z,w)\|\tilde{\mathbf{b}}_{1}\|^{2}\\
	& \geq(|p_{2}|^{2}-1)(\|\tilde{\mathbf{b}}_{2}\|^{2}-\|\tilde{\mathbf{b}}_{1}\|^{2})+\|\tilde{\mathbf{b}}_{2}\|^{2}\\
	& \geq\|\tilde{\mathbf{b}}_{2}\|^{2}.
	\end{align*}
	This shows $\tilde{\mathbf{b}}_{2}$ is the shortest lattice vector that is
	independent of $\tilde{\mathbf{b}}_{1}$. The proof for the case $\xi=\frac{1+\sqrt{-d}}{2}$
	follows the same way by replacing $Q_{1}(x,y,z,w)$ with $Q_{2}(x,y,z,w)$.
\end{proof}

\subsection{Numerical Examples}
The above
result is further explained through numerical examples. Specifically,
we show how the algorithm finds the two successive minima when the
domain is Euclidean, and how the algorithm fails to work when it is
non-Euclidean.

\textbf{Example 1 (Euclidean domain).} Consider the field $K=\mathbb{Q}\left(\sqrt{-3}\right)$
and its maximal ring of integers $\mathbb{Z}[\omega]$. Suppose the
input lattice basis is 
\[
\mathbf{B}=\left[\begin{array}{cc}
4+\omega & 1+4\omega\\
-1+5\omega & 1+2\omega
\end{array}\right].
\]
The algebraic reduction on this basis will consist of a swap, a size
reduction, and another swap, to yield the reduced basis 
\[
\tilde{\mathbf{B}}=\left[\begin{array}{cc}
-3+3\omega & 1+4\omega\\
2-3\omega & 1+2\omega
\end{array}\right],
\]
which satisfies $\left\Vert \tilde{\mathbf{b}}_{1}\right\Vert ^{2}=\lambda_{1}^{2}=16$,
and $\left\Vert \tilde{\mathbf{b}}_{2}\right\Vert ^{2}=\lambda_{2}^{2}=28$.
On the contrary, if we turn $\mathbf{B}$ into a real basis and perform
real LLL (whose Lovasz's parameter is $1$) on it, the square norm
of the reduced vectors are respectively $16$, $16$, $31$, and $28$.
In its reduced basis, the first two vectors are not independent over
$K$, and the second shortest vector is in the last position. In this
scenario only the Minkowski reduction on the real basis can have the
same effect as our algebraic lattice reduction, whose reduced vectors
respectively have square norms $16$, $16$, $28$, and $28$.

\noindent \textbf{Example 2 (non-Euclidean domain).} Consider the
field $K=\mathbb{Q}(\sqrt{-5})$ and its maximal ring of integers
$\mathbb{Z}[\sqrt{-5}]$. By Definition \ref{def_euclidean}, $\mathbb{Z}[\sqrt{-5}]$ is not Euclidean. We begin with the following basis:
\[
\mathbf{B}=\left[\begin{array}{cc}
2+3\sqrt{-5} & 8+\sqrt{-5}\\
2+\sqrt{-5} & 2
\end{array}\right].
\]
Performing algebraic reduction on this basis consists of a single
size reduction, resulting in the basis 
\[
\tilde{\mathbf{B}}=\left[\begin{array}{cc}
2+3\sqrt{-5} & 6-2\sqrt{-5}\\
2+\sqrt{-5} & -\sqrt{-5}
\end{array}\right].
\]

Such a basis is reduced in the sense of Gauss whose vectors have square
lengths of $58$ and $61$. However, running real LLL over the corresponding
four dimensional basis returns reduced vectors with respective square
lengths $20,30,26,39$. As such, we conclude that the algebraic Gauss's
algorithm does not guarantee an output that corresponds to the successive
minima of the lattice if the chosen ring is not Euclidean.

\section{\label{sec:LLL-over-modules}Algebraic LLL Reduction in High dimensions}

We now introduce the definition of algebraic LLL to address more general higher dimensional lattices.

\begin{defn}[Algebraic LLL]
\label{def:Alll}An $n\times n$ complex matrix $\mathbf{B}\in\mathbb{C}^{n\times n}$
is called an ALLL-reduced basis of lattice $\Lambda^{\mathbb{Z}\left[\xi\right]}\left(\mathbf{B}\right)$
if its QR-decomposition $\mathbf{B}=\mathbf{Q}\mathbf{R}$ satisfies
the following two conditions: 
\begin{equation}
\mathcal{Q}_{\mathbb{Z}\left[\xi\right]}\left(\frac{R_{j,k}}{R_{j,j}}\right)=0,\thinspace\forall\thinspace j<k;\thinspace(\mbox{\ensuremath{\mathrm{size\thinspace reduction\thinspace condition}}})\label{eq:sizecondition}
\end{equation}
\begin{equation}
\delta|R_{j-1,j-1}|^{2}\leq|R_{j,j}|^{2}+|R_{j-1,j}|^{2},(\mbox{\ensuremath{\mathrm{Lov\acute{a}sz's\thinspace condition}}})\label{eq:lovaszcondition}
\end{equation}
$2\leq i\leq n.$ $R_{j,k}$ refers the $\left(j,k\right)$th entry
of $\mathbf{R}$, and $\delta$ is called Lov\'asz's parameter. The feasible range of $\delta$, $\rho_{\mathbb{Z}\left[\xi\right]}^{2}<\delta\leq1$, will be exemplified in this section.
\end{defn}

Based on Lemma \ref{GaussRedlem}, we can make the size reduction condition in (\ref{eq:sizecondition}) explicit, i.e., 
\begin{equation}
	\left|\mathfrak{R}\left(\frac{R_{j,k}}{R_{j,j}}\right)\right|\leq\frac{1}{2}, 	\left|\mathfrak{I}\left(\frac{R_{j,k}}{R_{j,j}}\right)\right|\leq\frac{\sqrt{d}}{2}
\end{equation}
for a Type I ring, and 
\begin{equation}
\left|\mathfrak{R}\left(\frac{R_{j,k}}{R_{j,j}}\right)\right|\leq\frac{1}{2}, 	\left|\Im\left(\frac{R_{j,k}}{R_{j,j}}\right)\right|\leq\frac{1}{\sqrt{d}}\left(-\left|\Re\left(\frac{R_{j,k}}{R_{j,j}}\right)\right|+\frac{1+d}{4}\right)
\end{equation}
for a Type II ring.


We explain how the lower bound of Lov\'asz's parameter $\delta$ should be chosen
based on the covering radius $\rho_{\mathbb{Z}\left[\xi\right]}$
of lattice $\Lambda ^{\mathbb{Z}}\left(\Phi^{\mathbb{Z}\left[\xi\right]}\right)$, where
\[
\rho_{\mathbb{Z}\left[\xi\right]}\triangleq\max_{x\in\mathbb{C}}\left| x-\mathcal{Q}_{\mathbb{Z}\left[\xi\right]}\left(x\right)\right| .
\]
We remind the reader that $\Lambda ^{\mathbb{Z}}\left(\Phi^{\mathbb{Z}\left[\xi\right]}\right)$ denotes a 2-D lattice defined by $\mathbb{Z}\left[\xi\right]$.

\begin{lem}[Covering radius]
\label{thm:CVradiusThm} For an embedded lattice $\Lambda ^{\mathbb{Z}}\left(\Phi^{\mathbb{Z}\left[\xi\right]}\right)$, we have 
\[
\rho_{\mathbb{Z}\left[\xi\right]}=\begin{cases}
\frac{\sqrt{1+d}}{2} & \mathrm{if}\thinspace\xi=\sqrt{-d},\thinspace d>0;\\
\frac{d+1}{4\sqrt{d}} & \mathrm{if}\thinspace\xi=\frac{1+\sqrt{-d}}{2},\thinspace d>0.
\end{cases}
\]
\end{lem}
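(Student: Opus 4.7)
The plan is to reduce the problem to a planar geometric optimization. By translation invariance of the lattice, writing $y=x-\mathcal{Q}_{\mathbb{Z}[\xi]}(x)$ one has $|y|\le|y-q|$ for every $q\in\mathbb{Z}[\xi]$, i.e.\ $y$ is fully $\mathbb{Z}[\xi]$-reduced in the sense defined just before the preceding lemma; conversely, every fully reduced point arises this way. Hence
\[
\rho_{\mathbb{Z}[\xi]}\;=\;\max_{x\in\mathbb{C}}\bigl|x-\mathcal{Q}_{\mathbb{Z}[\xi]}(x)\bigr|\;=\;\max\bigl\{|x|\;:\;x\text{ fully }\mathbb{Z}[\xi]\text{-reduced}\bigr\},
\]
and the task becomes to maximize $|x|^2=\Re(x)^2+\Im(x)^2$ over the set of fully reduced $x$.

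Next I invoke the earlier lemma, which pins down the real and imaginary parts of any fully reduced $x$. For $\xi=\sqrt{-d}$ the admissible region is the axis-aligned rectangle $[-1/2,1/2]\times[-\sqrt{d}/2,\sqrt{d}/2]$, while for $\xi=(1+\sqrt{-d})/2$ it is the centrally symmetric hexagon cut out by $|\Re x|\le 1/2$ together with $|\Im x|\le(1/\sqrt{d})((1+d)/4-|\Re x|)$. Because $|x|^2$ is convex and the admissible region is a bounded convex polygon, the maximum is attained at a vertex, so I only have to evaluate a short finite list. In the rectangle case the vertex $(1/2,\sqrt{d}/2)$ gives $|x|^2=(1+d)/4$ and hence $\rho_{\mathbb{Z}[\xi]}=\sqrt{1+d}/2$. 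In the hexagonal case the six vertices are $(\pm 1/2,0)$, $(0,\pm(1+d)/(4\sqrt{d}))$ and $(\pm 1/2,\pm(d-1)/(4\sqrt{d}))$, and a one-line calculation
\[
\tfrac{1}{4}+\tfrac{(d-1)^2}{16d}\;=\;\tfrac{4d+(d-1)^2}{16d}\;=\;\tfrac{(d+1)^2}{16d}
\]
shows that the last two kinds of vertex both achieve $|x|^2=(d+1)^2/(16d)$, which exceeds $1/4$ and yields the claimed value $(d+1)/(4\sqrt{d})$.

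The only subtlety is to verify that these candidate vertices actually realize the supremum, since the earlier lemma only provides a necessary condition for being fully reduced. I would confirm attainment by a direct check: at each extremal vertex $x^\star$, compute $|x^\star-q|$ for the handful of relevant neighbours $q\in\{0,1,\xi,1+\xi\}$ and their negatives, confirm that equality $|x^\star-q|=|x^\star|$ is reached for the obvious ones, and note that all other lattice translates are strictly farther, placing $x^\star$ on the boundary of the Voronoi cell. I expect this final achievability check, especially the hexagonal case, to be the main nontrivial step; the saving grace is that a planar lattice has at most six Voronoi-relevant vectors, so the check reduces to a short finite verification rather than an enumeration over all of $\mathbb{Z}[\xi]$.
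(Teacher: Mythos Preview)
Your approach is correct and arrives at the same answer, but it is organized differently from the paper's proof. The paper invokes the standard fact that the Voronoi-relevant vectors of a lattice are contained among the shortest vectors of the nontrivial cosets of $\Lambda/2\Lambda$; it computes these explicitly for the two basis shapes $\Phi^{\mathbb{Z}[\xi]}$ and then reads off the farthest vertex of the Voronoi cell (in the Type~II case by intersecting the line $x=\tfrac12$ with the line $y=-\tfrac{1}{\sqrt d}x+\tfrac{d+1}{4\sqrt d}$). You instead recycle Lemma~1 to get an explicit polygon containing the Voronoi cell, maximize the convex function $|x|^2$ over that polygon by checking vertices, and then plan a direct achievability check. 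Your route is more elementary in that it avoids citing the coset characterization of relevant vectors, at the cost of the extra verification step; the paper's route is cleaner because once the relevant vectors are in hand the Voronoi cell is known exactly and no separate achievability argument is needed.

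Two small corrections to your vertex list. In the Type~II case the six vertices of the hexagon are $\bigl(0,\pm\tfrac{d+1}{4\sqrt d}\bigr)$ and $\bigl(\pm\tfrac12,\pm\tfrac{d-1}{4\sqrt d}\bigr)$ (four points, all sign combinations); the points $(\pm\tfrac12,0)$ lie on edge interiors, not at vertices. This does not affect your maximum. Also, for the achievability check your list of test neighbours should include $1-\xi$ (equivalently $\overline{\xi}$), since the six relevant vectors in the Type~II case are $\pm1,\pm\xi,\pm(1-\xi)$; the element $1+\xi$ that you list is not Voronoi-relevant. With these tweaks the verification goes through exactly as you outline. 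In fact, you could avoid the verification altogether by observing that the inequalities of Lemma~1 are precisely the half-space constraints coming from these six neighbours, so the polygon you wrote down already \emph{is} the Voronoi cell, not merely a superset of it.
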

\begin{IEEEproof}
	Shown in Appendix \ref{ap-prof1}.
\end{IEEEproof}

Since the so-called Siegel's condition \cite{Gama2006} based on rephrasing (\ref{eq:lovaszcondition})
is 
\begin{equation}
\left(\delta-\left|\frac{R_{j-1,j}}{R_{j-1,j-1}}\right|^{2}\right)|R_{j-1,j-1}|^{2}\leq|R_{j,j}|^{2},\label{eq:siegel1}
\end{equation}
it suffices to choose $\delta>\rho_{\mathbb{Z}\left[\xi\right]}^{2}$. 
Commonly used values of $\rho_{\mathbb{Z}\left[\xi\right]}$ are
 shown in Table I.

 \newcolumntype{g}{>{\columncolor{Gray}}c}
\begin{table*}[h!]
	\caption{The covering radiuses of rings in complex quadratic fields (the shaded region satisfies $\rho_{\mathbb{Z}\left[\xi\right]}<1$).}
	\label{tab:mor15-1} \centering 
	\begin{tabular}{|c|g|g|g|g|g|c|c|c|}
		\hline 
		Types of Rings & $d=1$ & $d=2$ & $d=3$ & $d=7$ & $d=11$ & $d=5$ & $d=13$ & $d=15$ \tabularnewline
		\hline 
		\hline 
		$\rho_{\mathbb{Z}\left[\xi\right]}$ & $\frac{\sqrt{2}}{2}$ & $\frac{\sqrt{3}}{2}$ & $\frac{\sqrt{3}}{3}$ & $\frac{2\sqrt{7}}{7}$ & $\frac{3\sqrt{11}}{11}$ & $\frac{\sqrt{6}}{2}$ & $\frac{\sqrt{14}}{2}$ & $\frac{4\sqrt{15}}{15}$ \tabularnewline
		\hline 
		$1-\rho_{\mathbb{Z}\left[\xi\right]}^2$ & $\frac{1}{2}$ & $\frac{1}{4}$ & $\frac{2}{3}$ & $\frac{3}{7}$ & $\frac{2}{11}$ & $-\frac{1}{2}$ & $-\frac{5}{2}$ & $-\frac{1}{15}$ \tabularnewline
		\hline 
	\end{tabular}
\end{table*}

Now we specify the upper bound for $\delta$ and consequently for
$\rho_{\mathbb{Z}\left[\xi\right]}^{2}$ through a potential-function
argument \cite[P. 4790]{Lyu2017}. Define the potential function of
a lattice basis as:
\[
\mathrm{Pot}\left(\mathbf{R}\right)=\prod_{j=1}^{n}\det\left(\Lambda\left(\mathbf{R}_{\Gamma_{i+1}}\right)\right)^{2}=\prod_{j=1}^{n}|R_{j,j}|^{2\left(n-j+1\right)}.
\]
 Let the lattice bases be $\mathbf{R}$ before the swap and $\mathbf{R}'$
after the swap. If Lov\'asz's condition fails to hold, the ratio
of their potential functions is:
\begin{align}
\frac{\mathrm{Pot}\left(\mathbf{R}'\right)}{\mathrm{Pot}\left(\mathbf{R}\right)} & =\frac{\left(|R_{j-1,j}|^{2}+|R_{j,j}|^{2}\right)^{n-j+2}\left(\frac{|R_{j,j}|^{2}|R_{j-1,j-1}|^{2}}{|R_{j-1,j}|^{2}+|R_{j,j}|^{2}}\right)^{n-j+1}}{|R_{j,j}|^{2\left(n-j+1\right)}|R_{j-1,j-1}|^{2\left(n-j+2\right)}} \nonumber\\
 & =\frac{|R_{j-1,j}|^{2}+|R_{j,j}|^{2}}{|R_{j-1,j-1}|^{2}} \nonumber\\
 & <\delta. \label{eq:potdecrease}
\end{align}
Clearly one should ensure $\rho_{\mathbb{Z}\left[\xi\right]}^{2}<\delta\leq1$,
otherwise the algorithm may not converge. By using Lemma \ref{thm:CVradiusThm}
to evaluate the quadratic fields that satisfy $\rho_{\mathbb{Z}\left[\xi\right]}^{2}<1$,
we arrive at the following proposition: 
\begin{prop}
\label{prop:norm_eu}Only the rings from $5$ complex quadratic fields
can be used to define Lov\'asz's condition; they are $\mathbb{Q}\left(\sqrt{-d}\right)$
where $d$ takes the values 
\[
1,2,3,7,11.
\]
\end{prop}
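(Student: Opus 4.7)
The plan is to combine the two-sided sandwich on Lov\'asz's parameter already derived just above the statement: Siegel's reformulation (\ref{eq:siegel1}) forces $\delta > \rho_{\mathbb{Z}[\xi]}^{2}$ so that a size-reduced basis can actually satisfy Lov\'asz's condition, while the potential-function ratio (\ref{eq:potdecrease}) forces $\delta \leq 1$ for the swap to strictly decrease the potential and hence for the algorithm to terminate. Hence only those rings with $\rho_{\mathbb{Z}[\xi]}^{2} < 1$ admit a valid Lov\'asz's condition, and the proof reduces to enumerating such rings by plugging in the closed-form covering radii supplied by Lemma \ref{thm:CVradiusThm}.

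For Type I rings ($\xi = \sqrt{-d}$, which is the case $-d \equiv 2, 3 \pmod 4$), I would substitute $\rho_{\mathbb{Z}[\xi]}^{2} = (1+d)/4$; the inequality $(1+d)/4 < 1$ collapses to $d < 3$. Intersecting with the squarefree positive integers whose negation lies in $\{2,3\} \pmod 4$ leaves exactly $d \in \{1, 2\}$. For Type II rings ($\xi = (1+\sqrt{-d})/2$, which is the case $-d \equiv 1 \pmod 4$, equivalently $d \equiv 3 \pmod 4$), substituting $\rho_{\mathbb{Z}[\xi]}^{2} = (d+1)^{2}/(16d)$ turns the bound into $(d+1)^{2} < 16 d$, i.e.\ the quadratic inequality $d^{2} - 14 d + 1 < 0$, whose real-root interval is $(7 - 4\sqrt{3},\; 7 + 4\sqrt{3}) \approx (0.07,\, 13.93)$. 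Intersecting with the squarefree positive integers $\equiv 3 \pmod 4$ yields $d \in \{3, 7, 11\}$, since the next candidate $d = 15$ already lies above the upper root.

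Concatenating the two cases produces the stated list $\{1, 2, 3, 7, 11\}$. There is no real obstacle beyond solving the Type II quadratic inequality and tracking the congruence and squarefreeness constraints built into the classification of $\xi$; all of the genuine content has already been packaged into Lemma \ref{thm:CVradiusThm} on the covering radius and the potential-function argument that established $\rho_{\mathbb{Z}[\xi]}^{2} < \delta \leq 1$.
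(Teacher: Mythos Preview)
Your proposal is correct and follows exactly the route the paper takes: the sandwich $\rho_{\mathbb{Z}[\xi]}^{2}<\delta\leq 1$ from Siegel's reformulation and the potential-function decrease, followed by plugging in the closed-form covering radii of Lemma~\ref{thm:CVradiusThm} and enumerating the admissible $d$. The paper states the proposition as an immediate consequence of those preceding facts without writing out the case-by-case inequalities, so your explicit Type~I/Type~II computations simply fill in the details the paper leaves implicit.
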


Such rings are all the norm-Euclidean ones in imaginary quadratic fields, because the condition to check the norm-Euclideanity of $\mathbb{Z}\left[\xi\right]$  is exactly  $\rho_{\mathbb{Z}\left[\xi\right]}^{2}<1$ \cite{Kim2017}.

\subsection{Performance Characterisation}

In the following, we set $\delta=\rho_{\mathbb{Z}\left[\xi\right]}^{2}+\epsilon$
with $0<\epsilon\leq1-\rho_{\mathbb{Z}\left[\xi\right]}^{2}$. The
overall performance of algebraic LLL can be described as follows. 
\begin{thm}
\label{prop:odlemma} Let  
$\tilde{\mathbf{B}}=[\tilde{\mathbf{b}}_{1},\ldots\thinspace,\tilde{\mathbf{b}}_{n}]$ be an ALLL-reduced basis w.r.t. an input $\mathbf{B}\in\mathbb{C}^{n\times n}$.   Then   $\tilde{\mathbf{B}}$ admits the following properties:
\begin{align}
 & \left\Vert \tilde{\mathbf{b}}_{1}\right\Vert \leq\epsilon^{-\frac{n-1}{4}} \left|\det\left({\mathbf{B}}\right)\right|^{1/n},\label{eq:lllpr1}\\
 & \left\Vert \tilde{\mathbf{b}}_{1}\right\Vert \leq\epsilon^{-\frac{n-1}{2}}\lambda_{1,\mathbb{Z}\left[\xi\right]},\label{eq:lllpr2}\\
 & \eta_{\mathbb{Z}\left[\xi\right]}(\tilde{\mathbf{B}})\leq\det\left(\Phi^{\mathbb{Z}\left[\xi\right]}\right)^{-n}\prod_{j=1}^{n}\left(1+\rho_{\mathbb{Z}\left[\xi\right]}^{2}\left(\frac{\epsilon^{-1}-\epsilon^{-j}}{1-\epsilon^{-1}}\right)\right)^{1/2}.\label{eq:lllpr3}
\end{align}
\end{thm}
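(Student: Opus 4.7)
The plan is to reduce all three inequalities to bounds on the diagonal entries $R_{j,j}$ of the QR factor of $\tilde{\mathbf{B}}$. First I would quantify the two ALLL conditions. The size-reduction condition $\mathcal{Q}_{\mathbb{Z}[\xi]}(R_{i,j}/R_{i,i})=0$ means, by definition of the covering radius in Lemma \ref{thm:CVradiusThm}, that $|R_{i,j}/R_{i,i}|\le\rho_{\mathbb{Z}[\xi]}$ for all $i<j$. Substituting this into Siegel's rewriting (\ref{eq:siegel1}) of Lov\'asz's condition yields $\epsilon\,|R_{j-1,j-1}|^{2}\le|R_{j,j}|^{2}$, and iterating gives the key chain $|R_{i,i}|^{2}\le\epsilon^{-(j-i)}|R_{j,j}|^{2}$ for all $1\le i\le j\le n$.

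For (\ref{eq:lllpr1}), I would combine this chain with the identity $\prod_{j=1}^{n}|R_{j,j}|^{2}=|\det(\mathbf{B})|^{2}$ coming from the complex QR decomposition. Setting $i=1$ and taking the product over $j$ gives $|R_{1,1}|^{2n}\le\epsilon^{-n(n-1)/2}|\det(\mathbf{B})|^{2}$; since $\|\tilde{\mathbf{b}}_{1}\|=|R_{1,1}|$, extracting the $2n$-th root delivers (\ref{eq:lllpr1}). For (\ref{eq:lllpr2}), I take any nonzero $\mathbf{v}=\tilde{\mathbf{B}}\mathbf{y}\in\Lambda^{\mathbb{Z}[\xi]}$ and let $k$ be the largest index with $y_{k}\ne 0$. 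Upper-triangularity of $\mathbf{R}$ yields $\|\mathbf{v}\|^{2}\ge|R_{k,k}|^{2}|y_{k}|^{2}$, and every nonzero element of $\mathbb{Z}[\xi]$ for the five admissible rings has squared absolute value at least $1$, so $|y_{k}|^{2}\ge 1$. Combining with the chain, $\lambda_{1,\mathbb{Z}[\xi]}^{2}\ge\min_{k}|R_{k,k}|^{2}\ge\epsilon^{n-1}|R_{1,1}|^{2}$, which gives (\ref{eq:lllpr2}).

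For (\ref{eq:lllpr3}), I expand $\|\tilde{\mathbf{b}}_{j}\|^{2}=|R_{j,j}|^{2}+\sum_{i<j}|R_{i,j}|^{2}$ and bound each off-diagonal term by $|R_{i,j}|^{2}\le\rho_{\mathbb{Z}[\xi]}^{2}|R_{i,i}|^{2}\le\rho_{\mathbb{Z}[\xi]}^{2}\epsilon^{-(j-i)}|R_{j,j}|^{2}$, using both the size-reduction bound and the chain. Summing the resulting geometric series in $k=j-i$ produces $\|\tilde{\mathbf{b}}_{j}\|^{2}\le|R_{j,j}|^{2}\bigl(1+\rho_{\mathbb{Z}[\xi]}^{2}\tfrac{\epsilon^{-1}-\epsilon^{-j}}{1-\epsilon^{-1}}\bigr)$. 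Taking the product over $j$, collapsing $\prod_{j}|R_{j,j}|^{2}=|\det(\mathbf{B})|^{2}$, and dividing by $\mathrm{Vol}(\Lambda^{\mathbb{Z}[\xi]})$ as in (\ref{eq:volume}) then yields (\ref{eq:lllpr3}).

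The only non-routine ingredient is the appearance of the covering radius $\rho_{\mathbb{Z}[\xi]}$ in place of the real LLL constant $1/2$, both in the size-reduction bound and in the Siegel rearrangement; this is precisely what forces $\rho_{\mathbb{Z}[\xi]}^{2}<1$ and restricts the construction to the five norm-Euclidean rings identified in Proposition \ref{prop:norm_eu}. Once that observation is in place, the remainder of the argument is a direct translation of the classical proof of the LLL properties to the $\mathbb{Z}[\xi]$ setting.
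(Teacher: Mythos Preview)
Your proof is correct and follows essentially the same route as the paper: both derive the Siegel chain $|R_{i,i}|^{2}\le\epsilon^{-(j-i)}|R_{j,j}|^{2}$ from the size-reduction bound $|R_{i,j}/R_{i,i}|\le\rho_{\mathbb{Z}[\xi]}$ plugged into (\ref{eq:siegel1}), then obtain (\ref{eq:lllpr1}) by multiplying over $j$, (\ref{eq:lllpr2}) by projecting the shortest vector onto its last nonzero coordinate, and (\ref{eq:lllpr3}) by bounding $\sum_{i<j}|R_{i,j}|^{2}$ via the chain and the covering-radius inequality before dividing by the volume. Your choice of the \emph{largest} index with $y_{k}\neq 0$ in the argument for (\ref{eq:lllpr2}) is in fact slightly cleaner than the paper's phrasing, since it is exactly this choice that justifies $\|\mathbf{v}\|^{2}\ge|R_{k,k}|^{2}|y_{k}|^{2}$.
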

\begin{IEEEproof}
From Siegel's condition (\ref{eq:siegel1}), 
\begin{equation}
|R_{j-1,j-1}|^{2}\leq\epsilon^{-1}|R_{j,j}|^{2}.\label{eq:siegelproof}
\end{equation}
 By induction, it yields 
\[
\left\Vert \tilde{\mathbf{b}}_{1}\right\Vert ^{2}=|R_{1,1}|^{2}\leq\epsilon^{-\left(j-1\right)}|R_{j,j}|^{2},
\]
 $1\leq j\leq n$. Then (\ref{eq:lllpr1}) follows from taking the
product of these inequalities. As for (\ref{eq:lllpr2}), assume that
$x_{1},\ldots,x_{n}\in\mathbb{Z}\left[\xi\right]$ are a set of coprime
numbers such that $\left\Vert \sum_{j=1}^{n}x_{j}\tilde{\mathbf{b}}_{j}\right\Vert =\lambda_{1,\mathbb{Z}\left[\xi\right]}$.
Notice that there must exist one index $k$ with $|x_{k}|\geq1$,
so that $\lambda_{1,\mathbb{Z}\left[\xi\right]}^{2}\geq|x_{k}|^{2}|R_{k,k}|^{2}$.
Then it yields $\lambda_{1,\mathbb{Z}\left[\xi\right]}^{2}\geq|R_{k,k}|^{2}\geq\epsilon^{k-1}\left\Vert \tilde{\mathbf{b}}_{1}\right\Vert ^{2}\geq\epsilon^{n-1}\left\Vert \tilde{\mathbf{b}}_{1}\right\Vert ^{2}$,
which proves (\ref{eq:lllpr2}) . Lastly, in the size reduction condition,
we have $|\frac{R_{j,j'}}{R_{j,j}}|\leq\rho_{\mathbb{Z}\left[\xi\right]}$
$\forall j<j'$, and 
\begin{align}
\left\Vert \mathbf{R}_{1:n,j}\right\Vert ^{2} & =|R_{j,j}|^{2}+\sum_{j'<j}|R_{j',j}|^{2}\nonumber \\
 & \leq|R_{j,j}|^{2}+\sum_{j'<j}\rho_{\mathbb{Z}\left[\xi\right]}^{2}|R_{j',j'}|^{2}\nonumber \\
 & \leq|R_{j,j}|^{2}\left(1+\rho_{\mathbb{Z}\left[\xi\right]}^{2}\left(\epsilon^{-1}+\epsilon^{-2}+\cdots+\epsilon^{-\left(j-1\right)}\right)\right).\label{eq:odstep3}
\end{align}
By substituting the above into the definition, the orthogonality defect
\begin{align*}
\eta_{\mathbb{Z}\left[\xi\right]}(\tilde{\mathbf{B}}) & =\frac{\prod_{j=1}^{n}\left\Vert \mathbf{R}_{1:n,j}\right\Vert }{\det\left(\Phi^{\mathbb{Z}\left[\xi\right]}\right)^{n}\prod_{j=1}^{n}|R_{j,j}|}\\
 & \leq\det\left(\Phi^{\mathbb{Z}\left[\xi\right]}\right)^{-n}\prod_{j=1}^{n}\left(1+\rho_{\mathbb{Z}\left[\xi\right]}^{2}\left(\frac{\epsilon^{-1}-\epsilon^{-j}}{1-\epsilon^{-1}}\right)\right)^{1/2}.
\end{align*}
\end{IEEEproof}

Both (\ref{eq:lllpr1}) and (\ref{eq:lllpr2})
are essentially the same as those of real LLL, while (\ref{eq:lllpr3})
has some factors from ring $\mathbb{Z}\left[\xi\right]$ since its
analysis involves volumes and covering radiuses.
When fixing a common $\delta=1$ for the ALLL over Euclidean rings, we have $\epsilon^{-1}$ equals 
\begin{equation}
2,\,4,\,3/2,\,7/3,\,11/2
\label{epsilonvalues}
\end{equation}
respectively for $d=1,\,2,\,3,\,7,\,11$. \textit{It suggests that ALLL based on Eisenstein integers yields the smallest bound on the shortest vector.}

Of independent interest, we show how far the
size reduction is from the optimal length reduction that employs a
closest vector problem (CVP) algorithm \cite{Lyu2017}, which is useful for the decoding by embedding technique \cite{laura13}.
 
\begin{thm}
\label{prop:odlemma-2}Given a complex basis $\mathbf{B}\in\mathbb{C}^{n\times n}$,
the decoding radius $R_{\mathrm{size}}\triangleq\frac{1}{2}\min_{1\leq j\leq k}|R_{j,j}|$
of size reduction in round $k+1$ satisfies 
\[
R_{\mathrm{size}}\geq\frac{1}{4}\lambda_{1,\mathbb{Z}\left[\xi\right]}V_{2n}^{1/2n}\det\left(\Phi^{\mathbb{Z}\left[\xi\right]}\right)^{-1}\epsilon^{\left(k^{2}-k\right)/4}.
\]
\end{thm}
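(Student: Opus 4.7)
The plan is to combine Siegel's inequality, arising from Lov\'asz's condition on the first $k$ LLL-reduced vectors, with the algebraic Minkowski's first theorem (\ref{eq:MFTmo}) of Theorem~\ref{thm:MinThms}. Siegel will deliver the $\epsilon^{(k^{2}-k)/4}$ factor through a product of iterated inequalities, while Minkowski (after substituting $\gamma_{2n} \leq 4 V_{2n}^{-1/n}$) will inject the $\lambda_{1,\mathbb{Z}\left[\xi\right]} V_{2n}^{1/(2n)} \det(\Phi^{\mathbb{Z}\left[\xi\right]})^{-1}$ factor.

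I would first rewrite Siegel's condition (\ref{eq:siegel1}) as $|R_{j-1,j-1}|^{2} \leq \epsilon^{-1}|R_{j,j}|^{2}$ for $2 \leq j \leq k$, using $\delta - \rho_{\mathbb{Z}\left[\xi\right]}^{2} = \epsilon$ together with the size-reduction estimate $|R_{j-1,j}/R_{j-1,j-1}| \leq \rho_{\mathbb{Z}\left[\xi\right]}$. Iterating backward to index $1$ gives $|R_{1,1}|^{2} \leq \epsilon^{-(j-1)}|R_{j,j}|^{2}$, and multiplying over $j = 1, \ldots, k$ yields $|R_{1,1}|^{2k} \leq \epsilon^{-k(k-1)/2}\prod_{j=1}^{k}|R_{j,j}|^{2}$; this product form isolates the $(k^{2}-k)/4$ exponent after a square root. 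The single-index version also delivers $\min_{1 \leq j \leq k}|R_{j,j}| \geq \epsilon^{(k-1)/2}|R_{1,1}|$, which I will use to transfer a lower bound on $|R_{1,1}|$ to $\min_{1 \leq j \leq k}|R_{j,j}|$ at the end.

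Next, rearranging Minkowski's theorem with $\gamma_{2n} \leq 4 V_{2n}^{-1/n}$ yields $|\det(\mathbf{B})|^{1/n} \geq \lambda_{1,\mathbb{Z}\left[\xi\right]} V_{2n}^{1/(2n)}/(2|\det(\Phi^{\mathbb{Z}\left[\xi\right]})|^{1/2})$, a lower bound on the full product $\prod_{j=1}^{n}|R_{j,j}|$. Chaining this Minkowski bound through the Siegel product relation above extracts a lower bound on $|R_{1,1}|$ carrying the $\epsilon^{(k^{2}-k)/4}$ factor and the $\lambda_{1,\mathbb{Z}\left[\xi\right]} V_{2n}^{1/(2n)} \det(\Phi^{\mathbb{Z}\left[\xi\right]})^{-1}$ prefactor; applying the min-to-$|R_{1,1}|$ relation and dividing by two then produces $R_{\mathrm{size}}$ with the $\tfrac{1}{4}$ prefactor.

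The main obstacle I anticipate is the treatment of the indices $j > k$ inside $|\det(\mathbf{B})|$, where no Siegel relation is available. My preferred workaround is to apply Minkowski directly to the rank-$k$ sublattice $L_{k} = \Lambda^{\mathbb{Z}\left[\xi\right]}(\mathbf{b}_{1}, \ldots, \mathbf{b}_{k})$ and transfer the inequality via $\lambda_{1,\mathbb{Z}\left[\xi\right]} \leq \lambda_{1,\mathbb{Z}\left[\xi\right]}(L_{k})$, then use the monotonicity $V_{2n}^{1/(2n)} \leq V_{2k}^{1/(2k)}$ to match the stated form. Careful bookkeeping of the $|\det(\Phi^{\mathbb{Z}\left[\xi\right]})|$ powers (from $-1/2$ in the raw Minkowski bound to $-1$ in the statement) and the interplay between the $k$-th root and the $(k^{2}-k)/4$ exponent will determine the final numerical constants.
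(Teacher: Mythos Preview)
Your overall strategy—Siegel's inequality on the first $k$ LLL-reduced vectors combined with Minkowski's first theorem applied to the rank-$k$ sublattice $L_k$, using $\lambda_{1,\mathbb{Z}[\xi]}\le\lambda_{1,\mathbb{Z}[\xi]}(L_k)$ and the monotonicity $V_{2n}^{1/(2n)}\le V_{2k}^{1/(2k)}$—is exactly the paper's approach. But the way you route the argument through $|R_{1,1}|$ breaks down.

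The product relation you derive,
\[
|R_{1,1}|^{2k}\ \le\ \epsilon^{-k(k-1)/2}\prod_{j=1}^{k}|R_{j,j}|^{2},
\]
is an \emph{upper} bound on $|R_{1,1}|$. Minkowski on $L_k$ gives a \emph{lower} bound on $\prod_{j=1}^{k}|R_{j,j}|$. These two inequalities point the same way and cannot be chained to produce a lower bound on $|R_{1,1}|$; so the step ``chaining this Minkowski bound through the Siegel product relation above extracts a lower bound on $|R_{1,1}|$'' does not go through. (A direct lower bound $|R_{1,1}|=\|\tilde{\mathbf b}_1\|\ge\lambda_{1,\mathbb{Z}[\xi]}$ is of course available, but it carries neither the $V_{2n}^{1/(2n)}\det(\Phi^{\mathbb{Z}[\xi]})^{-1}$ factor nor the $\epsilon^{(k^2-k)/4}$ exponent, and it does not imply the stated bound in all parameter ranges.)

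The fix, and what the paper does, is to anchor Siegel at the \emph{last} index rather than the first. For each $j\le k$ one has $|R_{i,i}|\le\epsilon^{-(j-i)/2}|R_{j,j}|$ for $i\le j$, hence
\[
\prod_{i=1}^{j}|R_{i,i}|\ \le\ \epsilon^{-j(j-1)/4}\,|R_{j,j}|^{\,j}.
\]
Now Minkowski on the rank-$j$ sublattice yields $\lambda_{1,\mathbb{Z}[\xi]}^{2}\le 4V_{2j}^{-1/j}\,|\det\Phi^{\mathbb{Z}[\xi]}|\,(\prod_{i\le j}|R_{i,i}|)^{2/j}$, and the \emph{upper} bound on the product converts this into a lower bound on $|R_{j,j}|$. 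Taking the minimum over $j$ (which is attained at $j=k$ since $\epsilon<1$) gives the statement. Your observations about restricting Minkowski to $L_k$, the $V_{2n}$ versus $V_{2k}$ monotonicity, and the $|\det\Phi|$ bookkeeping are all correct and needed; only the anchoring index has to change.
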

\begin{IEEEproof}
When LLL is running in the $k+1$th round, then the basis $\left[\tilde{\mathbf{b}}_{1},\ldots\thinspace,\tilde{\mathbf{b}}_{k}\right]$
is LLL-reduced. Let $\left[\tilde{\mathbf{b}}_{1},\ldots\thinspace,\tilde{\mathbf{b}}_{k}\right]=\mathbf{QR}$
denote its QR-decomposition. By using Theorem \ref{thm:MinThms},
we have for $k=2,\ldots,n$, 
\begin{align*}
\lambda_{1,\mathbb{Z}\left[\xi\right]}^{2} & \leq4\left(V_{2k}^{-1/k}\right)\det\left(\Phi^{\mathbb{Z}\left[\xi\right]}\right)^{2} \left|\det\left(\mathcal{L}\left(\left[\tilde{\mathbf{b}}_{1},\ldots\thinspace,\tilde{\mathbf{b}}_{k}\right]\right)\right)\right|^{2/k}\\
 & =4\left(V_{2n}^{-1/n}\right)\det\left(\Phi^{\mathbb{Z}\left[\xi\right]}\right)^{2}\left(\prod_{j=1}^{k}|R_{j,j}|\right)^{2/k}\\
 & \leq4\left(V_{2n}^{-1/n}\right)\det\left(\Phi^{\mathbb{Z}\left[\xi\right]}\right)^{2}\left(\prod_{j=1}^{k}\epsilon^{-\left(k-j\right)/2}|R_{k,k}|\right)^{2/k}\\
 & =4\left(V_{2n}^{-1/n}\right)\det\left(\Phi^{\mathbb{Z}\left[\xi\right]}\right)^{2}|R_{k,k}|^{2}\epsilon^{-\left(k^{2}-k\right)/2}.
\end{align*}
For all quadratic number fields, their packing radiuses are still $1/2$.
By using the definition of the decoding radius, we have 
\begin{align*}
R_{\mathrm{size}} & \triangleq\frac{1}{2}\min_{1\leq j\leq k}|R_{j,j}|\\
 & \geq\frac{1}{4}\lambda_{1,\mathbb{Z}\left[\xi\right]}V_{2n}^{1/2n}\det\left(\Phi^{\mathbb{Z}\left[\xi\right]}\right)^{-1}\min_{1\leq j\leq k}\epsilon^{\left(j^{2}-j\right)/4}.
\end{align*}
\end{IEEEproof}

\subsection{\label{subsec:ImplemenLLL}Implementation and Complexity}

Regarding the implementation of $\mathcal{Q}_{\mathbb{Z}\left[\xi\right]}\left(\cdot\right)$,
for a Type I ring we have
\[
\mathcal{Q}_{\mathbb{Z}\left[\xi\right]}\left(x\right)=\left\lfloor \mathfrak{R}\left(x\right)\right\rceil +i\sqrt{d}\left\lfloor \mathfrak{I}\left(x\right)/\sqrt{d}\right\rceil ,
\]
because its lattice basis $\Phi^{\mathbb{Z}\left[\xi\right]}$ is
orthogonal. 

For a Type II ring, although implementing a sphere decoding algorithm
on basis $\Phi^{\mathbb{Z}\left[\xi\right]}$ suffices, there exist
simpler methods for doing so. For any $\lambda=a+\frac{1+\sqrt{-d}}{2}b,$
$a,b\in\mathbb{Z}$, if $b=2k,\thinspace k\in\mathbb{Z}$, then $\lambda=\left(a+k\right)+\sqrt{-d}k$.
If $b=2k+1,\thinspace k\in\mathbb{Z}$, then $\lambda=\left(a+k\right)+\frac{1}{2}+\sqrt{-d}k+\frac{\sqrt{-d}}{2}$.
Then we can see that $\mathbb{Z}\left[\xi\right]$ is simply the union
of a rectangular lattice $\mathbb{Z}\left[\sqrt{-d}\right]$ and its
coset $\mathbb{Z}\left[\sqrt{-d}\right]+d^{*}$, $d^{*}\triangleq\frac{1}{2}+\frac{\sqrt{-d}}{2}$.
Two examples of such lattices are reproduced in Fig. \ref{fig1_lambda1-1}.
In summary, for a Type II ring we have:
\begin{align*}
 & \mathcal{Q}_{\mathbb{Z}\left[\xi\right]}\left(x\right)=\arg\min_{y}\left| y-x\right| ,\\
 & y\in\left\{ \mathcal{Q}_{\mathbb{Z}\left[\sqrt{-d}\right]}\left(x\right),\mathcal{Q}_{\mathbb{Z}\left[\sqrt{-d}\right]}\left(x-d^{*}\right)+d^{*}\right\} .
\end{align*}

\begin{figure}[th]
\center

\includegraphics[clip,width=0.4\textwidth]{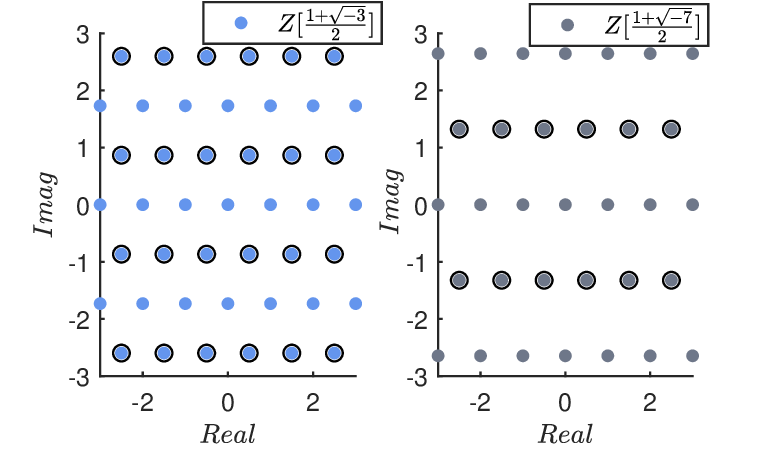}

\protect\caption{Representing a Type II ring by a rectangular ring and its coset. Dots
with open circles represent the cosets.}
\label{fig1_lambda1-1} 
\end{figure}

Now we present the pseudo-code of algebraic LLL in Algorithm
\ref{algeLLL}. Compared with the complex LLL algorithm in \cite{Gan2009}, the
major differences are: i) The rounding function in Step 5 is generalized
from over $\mathbb{Z}\left[i\right]$ to over $\mathbb{Z}\left[\xi\right]$.
ii) Formulas (7)-(15) in \cite{Gan2009} are simplified as a rotation
by quaternions, which is represented by Steps 13-16 of Algorithm \ref{algeLLL}.
The details are given in Appendix \ref{sec:Rotation-and-Quaternions}.

\begin{algorithm}
\KwIn{ lattice basis $\mathbf{B}\in\mathbb{C}^{n\times n}$, Lov\'asz's
parameter $\delta$, ring
$\mathbb{Z}[\xi]$ that we want to reduce the basis over.} \KwOut{reduced
basis $\tilde{\mathbf{B}}\in\mathbb{C}^{n\times n}$, unimodular matrix
$\mathbf{U}$ that makes $\tilde{\mathbf{B}}=\mathbf{B}\mathbf{U}$.}
$[\mathbf{Q},\mathbf{R}]=\mathrm{qr}(\mathbf{B})$; \Comment{The
QR decomposition of $\mathbf{B}$}\; $j=2$, $\mathbf{U}=\mathbf{I}_{n}$\;

\While{$j\leq n$} { 
	\textcolor{blue}{$\mathbf{s}=\mathbf{R}_{1:n,j}$, $\mathbf{t}=\mathbf{U}_{1:n,j}$; \Comment{included for boosted LLL}\;}
	\For{$k=j-1:-1:1$}{ $c=\mathcal{Q}_{\mathbb{Z}\left[\xi\right]}\left(\frac{R_{j,k}}{R_{j,j}}\right)$;
\Comment{Ring quantization}\;
 \If{$c\neq0$}{$\mathbf{R}_{1:n,j}\leftarrow\mathbf{R}_{1:n,j}-c\mathbf{R}_{1:n,k}$\;
$\mathbf{U}_{1:n,j}\leftarrow\mathbf{U}_{1:n,j}-c\mathbf{U}_{1:n,k}$\;}
} 
\textcolor{blue}{\If{$\|\mathbf{s}\|< \|\mathbf{R}_{1:n,j}\|$ and $\mathcal{Q}_{\mathbb{Z}\left[\xi\right]}\left(\frac{s_{j-1}}{R_{j-1,j-1}}\right)==0 $}{$\mathbf{R}_{1:n,j}=\mathbf{s}$, $\mathbf{U}_{1:n,j}=\mathbf{t}$;\Comment{included for boosted LLL} }}

\If{$\delta|R_{j-1,j-1}|^{2}>|R_{j,j}|^{2}+|R_{j-1,j}|^{2}$}
{ define $\mathbf{M}_{v^{*}}\triangleq\left[\begin{array}{cc}
\frac{R_{j-1,j}^{\dagger}}{\sqrt{|R_{j-1,j}|^{2}+|R_{j,j}|^{2}}} & \frac{R_{j,j}^{\dagger}}{\sqrt{|R_{j-1,j}|^{2}+|R_{j,j}|^{2}}}\\
\frac{-R_{j,j}}{\sqrt{|R_{j-1,j}|^{2}+|R_{j,j}|^{2}}} & \frac{R_{j-1,j}}{\sqrt{|R_{j-1,j}|^{2}+|R_{j,j}|^{2}}}
\end{array}\right]$\; swap $\mathbf{R}_{1:n,j}$ and $\mathbf{R}_{1:n,j-1}$, $\mathbf{U}_{1:n,j}$
and $\mathbf{U}_{1:n,j-1}$\; $\mathbf{R}_{j-1:j,1:n}\leftarrow\mathbf{M}_{v^{*}}\mathbf{R}_{j-1:j,1:n}$;
\Comment{Left rotation}\; $\mathbf{Q}_{1:n,j-1:j}\leftarrow\mathbf{Q}_{1:n,j-1:j}\mathbf{M}_{v^{*}}^{-1}$;
\Comment{Right rotation}\; $j\leftarrow\max(j-1,2)$\;} \Else{
$j\leftarrow j+1$\;} } $\tilde{\mathbf{B}}=\mathbf{Q}\mathbf{R}.$ 
\caption{The algebraic LLL algorithm.}
\label{algeLLL}  
\end{algorithm}

Then we analyze the number of loops in the above algorithm. Denote the number of positive and negative tests in Step 12 as $K^{+}$ and $K^{-}$, respectively.
As $K^{-}+K^{+}\leq 2 K^{-} +n -1$ \cite{Lyu2017}, it suffices to bound $K^{-}$. Based on (\ref{eq:potdecrease}), the potential function of the basis decreases in a $\log_{1/\delta}$ scale for each negative tests, and
 we have \cite{Lyu2017,jal08}
\[K^{-}\leq \frac{1}{\ln (1/\delta)} \ln (\kappa(\mathbf{B})^{n(n+1)/2)}),\]
where $\kappa(\mathbf{B})$ denotes the condition number of $\mathbf{B}$. For an $n \times n$ standard complex Gaussian random matrix $\mathbf{B}$, it has been shown \cite{gau05} that 
\[\mathbb{E}(\kappa(\mathbf{B})) < \ln(n)+2.24.\]
The average number of negative tests w.r.t. such input bases is therefore bounded by
\[\mathbb{E}(K^{-}) < \frac{n(n+1)}{2 \ln (1/\delta)} (\ln(n)+2.24).\]
The counterpart of $\mathbb{E}(K^{-})$ is basically the same for real standard Gaussian random matrices (as the input bases of LLL), whose expected condition number is upper bounded by $\ln(n)+2.258$ \cite{gau05}.

The size reduction (Lines 5-9 of Algorithm \ref{algeLLL}) dominates the complexity in each loop, whose complexity is $O(n^2)$. Although this $O(n^2)$ complexity is independent of the chosen ring, the hidden constant of a Type II ring is larger due to its more complicated quantization function.

Finally, let $\delta$ be a factor independent of $n$,
 the overall average complexity of algebraic LLL is $O(n^4 \ln (n))$.

\subsection{Beyond Algebraic LLL}
It is also possible to define the algebraic versions of boosted LLL (shorter basis length) \cite{Lyu2017}, deep LLL (shorter basis vectors)\cite{Schnorr1994} and BKZ (shorter basis vectors)\cite{Chen2011b}. 
Specifically, a simple form of boosted LLL only includes an additional rejection to LLL, and its algebraic version follows in the same vein.
The additional codes for boosted LLL has been marked blue in Algorithm \ref{algeLLL}.
If one intends to design an algebraic BKZ algorithm, the SVP subroutine in BKZ can employ the number of units to speedup the algorithm. 
For instance, only $1/4$ of the points within a Euclidean ball need
to be enumerated in $\mathbb{Z}\left[ i \right]$-lattices as $|\mathbb{Z}\left[i\right]^{\times}|=4$, and only $1/6$ of the points need
to be enumerated in $\mathbb{Z}\left[ \omega \right]$-lattices as $|\mathbb{Z}\left[\omega\right]^{\times}|=6$.
 
\section{\label{sec:app-lr-cf}Numerical Results}
In this section, we numerically verify the efficiency of the proposed algebraic lattice reduction algorithm. The purpose is to demonstrate that algebraic algorithms outperform their non-algebraic real counter-parts, and lattice reduction defined over Eisenstein integers generally yields shorter vectors. To foster reproducible research, MATLAB codes of the algorithms are open source and freely available at GitHub.\footnote{https://github.com/shx-lyu/algebraic-lll}

The types of lattice bases we considered are:

\noindent Type-I: Bases in compute-and-forward  \cite{FSK13,Sun2013,Tunali2015,jerry2018,yuan16}. A target basis $\mathbf{B}$
is decomposed from $\mathbf{M}_{\mathrm{CF}}=\mathbf{B}^\dagger \mathbf{B}$ where $\mathbf{M}_{\mathrm{CF}}= \mathbf{I}_{n} - \frac{P}{P ||\mathbf{h}||^2 +1 }\mathbf{h}\mathbf{h}^{\dagger}$ and $\mathbf{h}\sim\mathcal{CN}(\mathbf{0},\mathbf{I}_n)$. The quality of the bases are controlled by the signal-to-noise
ratio (SNR) parameter $P$.
 
\noindent Type-II: Bases in lattice-reduction-aided and integer-forcing MIMO detection \cite{park11,park12,ahmad13,wkma14,Zhan2014IT,Fischer19}. 
A target basis $\mathbf{B}$
is decomposed from $\mathbf{M}_{\mathrm{IF}}=\mathbf{B}^\dagger \mathbf{B}$ where $\mathbf{M}_{\mathrm{IF}}=\left(\mathbf{H}^\dagger\mathbf{H}+P^{-1}\mathbf{I}_n\right)^{-1}$ and entries of $\mathbf{H}$ are taken from $\mathcal{CN}({0},1)$. 

\noindent Type-III: Bases in the quadratic version of NTRU crytosystem (i.e., GNTRU \cite{GNTRU06} and ETRU \cite{ETRU15}). It considers the $2 n$-dimensional lattice in $\mathbb{Z}[\xi]^{2n}$ spanned by the columns of the basis matrix
\begin{equation}
\mathbf{B}=
\left[\begin{array}{cc}
\mathbf{I}_{n} & \mathbf{0}\\
\mathcal{H} & q\mathbf{I}_{n}
\end{array}\right], \label{eq_ntrubasis}
\end{equation}
where $\mathcal{H}$ is a circulant matrix 
 corresponding to the public key polynomial $\bar{\mathbf{h}}$. E.g., the first column of $\mathcal{H}$ indicated by $\bar{\mathbf{h}}$ is a pseudo-random vector defined over Eisenstein integers in ETRU. Unlike the first two scenarios, here the type of ring has been fixed when given a lattice basis.

\subsection{\label{sub:Real-vs.-Algebraic} Type-I Bases}
In the first example, we consider Type-I bases and the lattices
are respectively defined over Euclidean rings $\mathbb{Z}[i]$, $\mathbb{Z}[\sqrt{-2}]$,
and $\mathbb{Z}[\omega]$, and non-Euclidean rings $\mathbb{Z}[\sqrt{-5}]$,
$\mathbb{Z}[\frac{1+\sqrt{-19}}{2}]$, and $\mathbb{Z}[\frac{1+\sqrt{-39}}{2}]$.
We implement both algebraic LLL reductions and classic LLL reductions, with Lov\'asz's parameter $\delta=0.99$.

\begin{figure}
	%
	%
	%
	
	\subfloat[$n=8$, $P=10\mathrm{dB}$.]{\protect\includegraphics[width=0.4\textwidth]{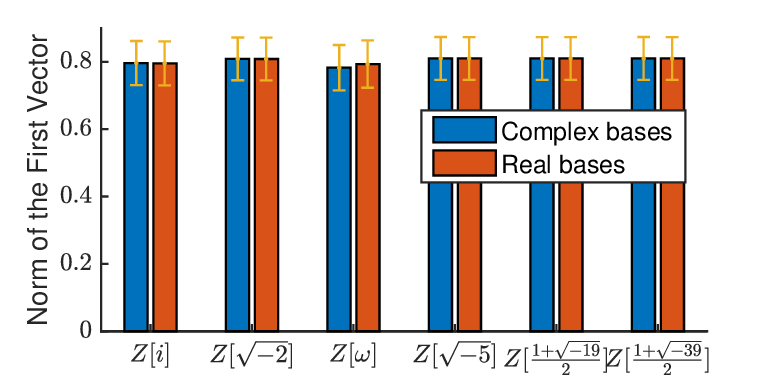}
		
	}
	
	\subfloat[$n=8$, $P=40\mathrm{dB}$.]{\protect\includegraphics[width=0.4\textwidth]{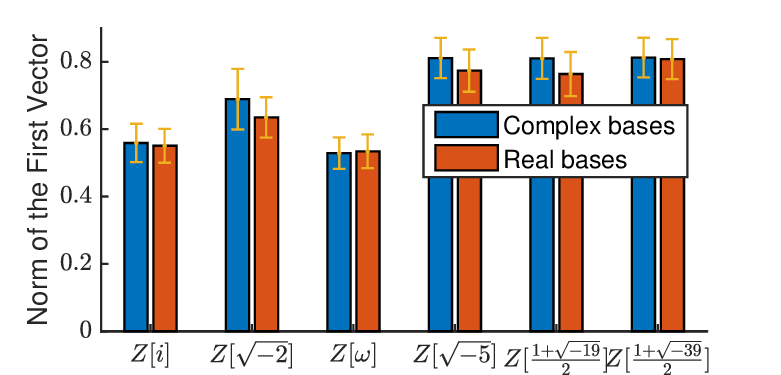}
		
	}
	
	\caption{The Euclidean norm of the first basis vector after different LLL reductions.}
	\label{fig_norm_LLL-1} 
\end{figure}

In Fig. \ref{fig_norm_LLL-1}, we plot the averaged Euclidean norm
of the first basis vector after different reduction approaches. The error bars denote the standard deviations of the objective values, and the legend ``Real bases'' denotes real
lattices generated from expanding $\mathbb{Z}\left[\xi\right]$-based
``Complex bases''. \textit{We can observe from Fig. \ref{fig_norm_LLL-1} that  $\mathbb{Z}[\omega]$-ALLL  finds shorter vector than other algebraic LLL. This can be explained by the fact that $\mathbb{Z}[\omega]$ has the smallest Euclidean minimum, so $\epsilon^{-1}$ in Theorem {\ref{prop:odlemma}} is smaller than others.}\footnote{The values are reflected by (\ref{epsilonvalues}).} Another observation is that, the $\mathbb{Z}[\omega]$-ALLL generates shorter
vectors than its real counter-part, while this is not true for other rings. 

%

\begin{figure}
	%
	%
	%
	
	\subfloat[$n=8$, $P=10\mathrm{dB}$.]{\protect\includegraphics[width=0.4\textwidth]{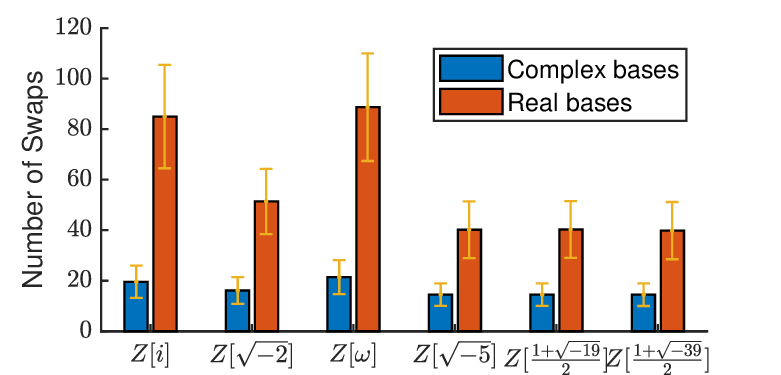}
		
	}
	
	\subfloat[$n=8$, $P=40\mathrm{dB}$.]{\protect\includegraphics[width=0.4\textwidth]{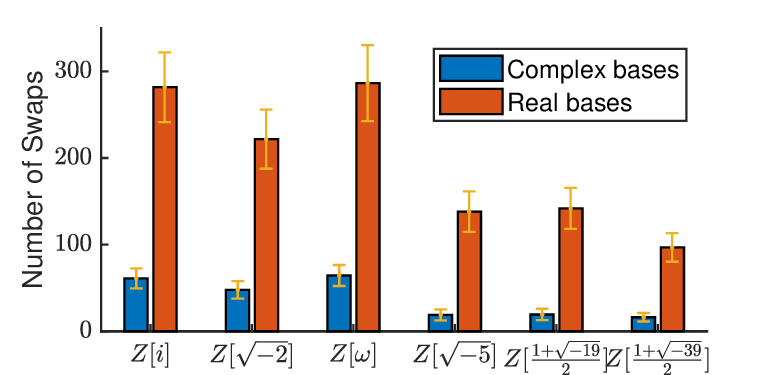}
		
	}
	
	\caption{The complexity of different algebraic LLL algorithms.}
	\label{fig_Complexity_LLL} 
\end{figure}

In Fig. \ref{fig_Complexity_LLL}, we plot the averaged number of
swaps when implementing algebraic/real LLL reduction, as this
metric can reflect the overall complexity of the algorithms. The sub-figures
show that algebraic LLL algorithms have only about $25\%$ complexity
w.r.t. their real counter-parts. This observation is not a surprise as we are dealing with lattices of smaller dimensions. Moreover, the complexity is roughly
inverse-proportional to $\det\left(\Phi^{\mathbb{Z}\left[\xi\right]}\right)$.

\subsection{Type-II bases}

\begin{figure}
	\center
	
	\includegraphics[clip,width=0.4\textwidth]{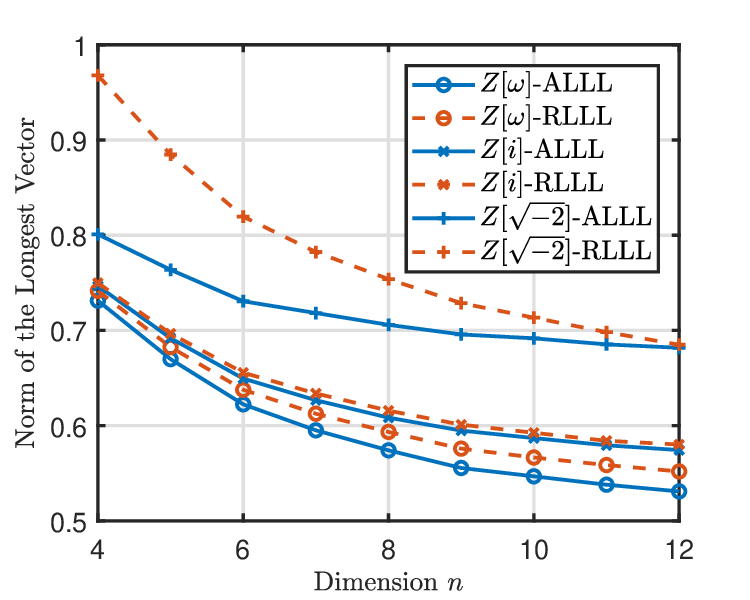}
	
	\protect\caption{The lengths of reduced bases of different algorithms. }
	\label{fig_if_len} 
\end{figure}

\begin{figure}
	\center
	
	\includegraphics[clip,width=0.4\textwidth]{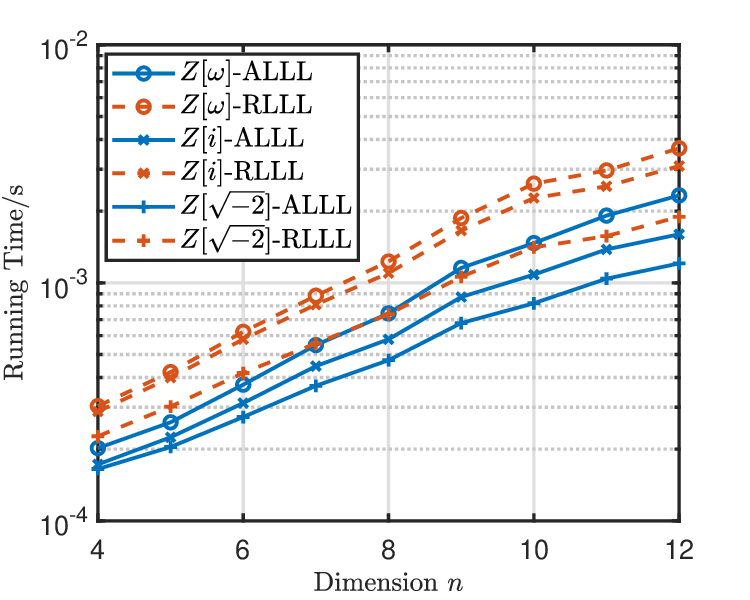}
	
	\protect\caption{The running time of different algorithms. }
	\label{fig_if_time} 
\end{figure}

Since non-Euclidean rings fail to work with most cases, we confine the chosen rings in this example as Eisenstein integers $\mathbb{Z}\left[\omega\right]$, Gaussian integers $\mathbb{Z}\left[i\right]$, and $\mathbb{Z}\left[\sqrt{-2}\right]$. The applications with Type-II bases are involved with the longest basis vector, so we adopt the boosted version of algebraic LLL and its non-algebraic counterpart \cite{Lyu2017}.

Fig. \ref{fig_if_len} shows the lengths of the longest basis vectors of different algorithms. The figure reveals  that $\mathbb{Z}\left[\omega\right]$, $\mathbb{Z}\left[i\right]$, and $\mathbb{Z}\left[\sqrt{-2}\right]$-based algebraic LLL all feature shorter longest vectors than their real counterparts, and $\mathbb{Z}\left[\omega\right]$-ALLL yields the shortest vectors among all these rings. These observations are consistent with those in Type-I bases.

Fig. \ref{fig_if_time} plots the averaged running time of these algorithms. While the algebraic LLL algorithms enjoy a gain of approximately $2$-dimensions (e.g., the time used of $10$-dimensional $\mathbb{Z}\left[\omega\right]$-RLLL suffices to solve a $12$-dimensional $\mathbb{Z}\left[\omega\right]$-ALLL), more compact rings in general cost more time. The non-algebraic $\mathbb{Z}\left[\omega\right]$-RLLL consumes the largest amount of time.

\subsection{Type-III bases}
In the last example, we examine the advantage of using algebraic LLL to reduce/pre-process bases defined by GNTRU \cite{GNTRU06} and ETRU \cite{ETRU15} cryptosystems. The $\mathcal{H}$ factor in Eq. (\ref{eq_ntrubasis}) is defined over Eisenstein integers $\mathbb{Z}\left[\omega\right]$ for ETRU and over Gaussian integers $\mathbb{Z}\left[i\right]$ for GNTRU. 
Since the algebraic and non-algebraic algorithms have almost \textit{identical} performance in the length-metric, we focus on showing the time-advantage of algebraic LLL. We set $q=383$ according to  \cite{ETRU15}.

In Fig. \ref{fig_ntru_time}, we plot the ratio of running time of algebraic over non-algebraic algorithms. As the dimension $2n$ rises from $4$ to $28$, the $\mathbb{Z}\left[\omega\right]$-based LLL only consumes about $50\%$ of RLLL's running time, and the $\mathbb{Z}\left[i\right]$-based LLL only consumes about $35\%$ of RLLL's running time.

Note that our lattice reduction algorithm over quadratic fields can also cryptanalyze a general NTRU crypto system \cite{HoffsteinPS98,StehleS11} defined over cyclotomic fields, as the Kronecker-Webers Theorem \cite{Kronecker-Weber1974} guarantees that quadratic fields are subfields of cyclotomic fields. A fully characterized example is shown in Appendix \ref{sec:cryptNTRU}.

\begin{figure}
	\center
	
	\includegraphics[clip,width=0.4\textwidth]{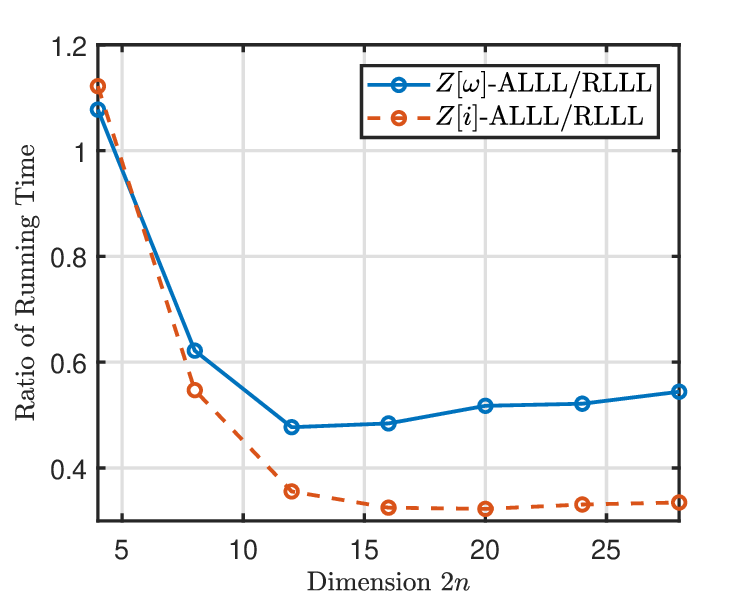}
	
	\protect\caption{The ratio of running time of algebraic over non-algebraic algorithms. }
	\label{fig_ntru_time} 
\end{figure}

\section{Conclusions}
In this work, we have investigated the properties of algebraic lattices and
 the proper design of Gauss and LLL reduction operating in the
algebraic domain. We have shown that, within Euclidean domains, it is possible to successfully build
an algebraic Gauss's algorithm that returns a basis that corresponds to
the successive minima for a two dimensional basis. Moreover, the convergence of algebraic LLL also requires the ring to be Euclidean. Our simulations results show that algebraic algorithms can not only run faster than non-algebraic algorithms, for Type-II bases they also generate shorter vectors.

\appendices{}

\section{\label{ADV:UNI}The advantage of using unimodular matrices}
In the non-asymptotic setting, without loss of generality, choose an error code over $\mathbb{F}_5$, which is further employed to build a lattice code over Gaussian integers. Specifically, elements in $\mathbb{F}_5$ are uniquely mapped to the coset leaders of $\mathbb{Z}[i]/(2+i)\mathbb{Z}[i]$, i.e.,
\[\{0,1,2,3,4\}\rightarrow \{3+4i,1,2+2i,-1,3+3i\}.\] 
Its inverse map from $\mathbb{Z}[i]$ to $\mathbb{F}_5$ is referred to as a homomorphism $f$.

Consider the application to integer-forcing \cite{Zhan2014IT} with $2$ transmit antennas and $2$ receive antennas. With dedicated algorithms, the receiver computes a network coding matrix $\mathbf{A}\in \mathbb{Z}[i]^{2 \times 2}$,  its linear combinations of messages $\mathbf{W}'$, and aims to inverse the following equation over $\mathbb{F}_5$:
\[\mathbf{W}'=f(\mathbf{A})\cdot \mathbf{W},\]
where $\mathbf{W}$ denotes the desired message matrix.

Based on a non-lattice reduction algorithm, assume that $\mathbf{A}$ has the form of
\[\mathbf{A}= \left[\begin{array}{cc}
3+2i & -1+i\\
3+4i & -5
\end{array}\right].\]
Then we obtain 
\[f(\mathbf{A})= \left[\begin{array}{cc}
2 & 4\\
0 & 0
\end{array}\right],\]
and $f(\mathbf{A})$ has no matrix inverse. On the contrary, if we employ a lattice reduction algorithm to design $\mathbf{A}$, then $\mathbf{A}$ must be a unimodular matrix, and Theorem \ref{thm-full} guarantees that $f(\mathbf{A})$ has full rank. Moreover, the matrix inverse is very simple based on unimodular matrices: $f(\mathbf{A})^{-1}=f(\mathbf{A}^{-1})$, in which the first inverse is over $\mathbb{F}_5$, and the second is over $\mathbb{Z}[i]$.

\section{\label{prof:lem1}Proof of Lemma \ref{thm:CVradiusThm} }\label{ap-prof1}
The covering radius $\rho_{\mathbb{Z}\left[\xi\right]}$
 can be analyzed through describing the relevant vectors of the Voronoi region of $\Lambda ^{\mathbb{Z}}\left(\Phi^{\mathbb{Z}\left[\xi\right]}\right)$. For a real lattice $\Lambda^{\mathbb{Z}}\left(\mathbf{B}^{\mathbb{R}}\right)$,
$\mathbf{B}^{\mathbb{R}}\in\mathbb{R}^{n\times n}$, its Voronoi region around the origin is
\[
\mathcal{V} \triangleq	\left\{\mathbf{x}\in \mathbb{R}^{n} \mathrel{\Big|} \left\Vert \mathbf{x} \right\Vert 
\leq
\left\Vert \mathbf{x}-\mathbf{t} \right\Vert \, \forall \mathbf{t}\in \Lambda^{\mathbb{Z}}\left(\mathbf{B}^{\mathbb{R}}\right), \mathbf{t}\neq \mathbf{0} 
\right\} .
\]
The points $\mathbf{p}$ of the lattice for which the hyper-plane between
$\mathbf{0}$ and $\mathbf{p}$ contains a facet of $\mathcal{V}$ are called the Voronoi relevant
vectors.
\begin{IEEEproof}
	Any given lattice $\Lambda^{\mathbb{Z}}\left(\mathbf{B}^{\mathbb{R}}\right)$ 
	can be partitioned into exactly $2^{n}$ cosets of the form $C_{\mathbf{B}^{\mathbb{R}},\mathbf{p}}=2\Lambda^{\mathbb{Z}}+\mathbf{B}^{\mathbb{R}}\mathbf{p}$
	with $\mathbf{p}\in\left\{ 0,1\right\} ^{n}$. If $\mathbf{s}_{\mathbf{p}}$
	is a shortest vector for $C_{\mathbf{B}^{\mathbb{R}},\mathbf{p}}$,
	then the set $\cup_{\mathbf{p}\in\left\{ 0,1\right\} ^{n}\backslash\left\{ \mathbf{0}\right\} }\left\{ \pm\mathbf{s}_{\mathbf{p}}\right\} $
	contains all the relevant vectors \cite{Viterbo1996}. For embedded lattices of $\mathbb{Z}\left[\xi\right]$
	in $\mathbb{R}^{2}$, their generator matrices must have the forms
	as shown in Eq. (\ref{eq:geMatrixPhi}). We discuss the two scenarios
	separately:
	
	i) If $\xi=\sqrt{-d}$ and $\mathbf{B}^{\mathbb{R}}=\Phi^{\mathbb{Z}\left[\xi\right]}=\left[\begin{array}{cc}
	1 & 0\\
	0 & \sqrt{d}
	\end{array}\right]$, we have 
	\[
	\cup_{\mathbf{p}\in\left\{ 0,1\right\} ^{2}\backslash\left\{ \mathbf{0}\right\} }\left\{ \pm\mathbf{s}_{\mathbf{p}}\right\} =\left\{ \pm\left[1,0\right]^{\top},\pm\left[0,\sqrt{d}\right]^{\top}\right\} .
	\]
	Then the covering radius in this case is $\rho_{\mathbb{Z}\left[\xi\right]}=\frac{\sqrt{1+d}}{2}$.
	
	ii) If $\xi=\frac{1+\sqrt{-d}}{2}$ and $\mathbf{B}^{\mathbb{R}}=\Phi^{\mathbb{Z}\left[\xi\right]}=\left[\begin{array}{cc}
	1 & \frac{1}{2}\\
	0 & \frac{\sqrt{d}}{2}
	\end{array}\right]$, the three cosets with non-zero shifts are 
	\[
	\begin{cases}
	C_{\mathbf{B}^{\mathbb{R}},[1,0]^{\top}}=2\Lambda^{\mathbb{Z}}+\left[1,0\right]^{\top},\\
	C_{\mathbf{B}^{\mathbb{R}},[0,1]^{\top}}=2\Lambda^{\mathbb{Z}}+\left[\frac{1}{2},\frac{\sqrt{d}}{2}\right]^{\top},\\
	C_{\mathbf{B}^{\mathbb{R}},[1,1]^{\top}}=2\Lambda^{\mathbb{Z}}+\left[\frac{3}{2},\frac{\sqrt{d}}{2}\right]^{\top}.
	\end{cases}
	\]
	It follows that 
	\begin{align*}
	& \cup_{\mathbf{p}\in\left\{ 0,1\right\} ^{2}\backslash\left\{ \mathbf{0}\right\} }\left\{ \pm\mathbf{s}_{\mathbf{p}}\right\} \\
	& =\left\{ \pm\left[1,0\right]^{\top},\pm\left[\frac{1}{2},\frac{\sqrt{d}}{2}\right]^{\top},\pm\left[\frac{1}{2},-\frac{\sqrt{d}}{2}\right]^{\top}\right\} .
	\end{align*}
	So the point in $\mathcal{V}$ that has the maximum distance to the origin can be obtained as the
	intersection between line $y=-\frac{1}{\sqrt{d}}x+\frac{d+1}{4\sqrt{d}}$
	and line $x=0$ (or line $x=\frac{1}{2}$). Lastly we obtain $\rho_{\mathbb{Z}\left[\xi\right]}=\frac{d+1}{4\sqrt{d}}.$ 
\end{IEEEproof}

\section{\label{sec:Rotation-and-Quaternions}Rotations and Quaternions}

By introducing the concept of quaternions, representations of rotations
become more compact, and unit normalisation of floating point quaternions suffers
from less rounding defects \cite{Dam98quaternions}. Now we explain why quaternions are involved.
As in \cite{Lyu2017}, the pseudo-codes of an LLL algorithm consist
of ``swaps'' and ``size reductions''. After a swap, the structure
of the $\mathbf{R}$ matrix has been destroyed. Since implementing
another factorisation costs too much complexity, we show that the
$\mathbf{R}$ matrix structure can be recovered by left multiplying
the matrix form of a quaternion. With a slight abuse of notation,
let $\left\{ 1,i,j,k\right\} $ be a basis for a vector space of dimension
$4$ over $\mathbb{R}$. These elements satisfy the rules $i^{2}=-1$,
$j^{2}=-1,$$k^{2}=-1$, and $k=ij=-ji$. The Hamilton's quaternions
is a set $\mathbb{H}$ defined by 
\[
\mathbb{H}\triangleq\left\{ x+yi+zj+wk\mid x,y,z,w\in\mathbb{R}\right\} .
\]
For any Hamilton's quaternion $q=x+yi+zj+wk$, it can be written as
\[
\left(x+yi\right)+\left(zj-wji\right)=\alpha_{q}+j\beta_{q},
\]
$\alpha_{q}\in\mathbb{C},$ $\beta_{q}\in\mathbb{C}$. Then $\mathbb{H}$
is also a $\mathbb{C}$-vector space with basis $\left\{ 1,j\right\} $.
Let $\psi\left(q\right)=\left[\alpha_{q},\beta_{q}\right]^{\top}$,
since the multiplication of $q$ with $v=\alpha_{v}+j\beta_{v}$ can
be identified as 
\[
\psi\left(vq\right)=\underset{\triangleq\mathbf{M}_{v}}{\underbrace{\left[\begin{array}{cc}
\alpha_{v} & -\beta_{v}^{\dagger}\\
\beta_{v} & \alpha_{v}^{\dagger}
\end{array}\right]}}\psi\left(q\right),
\]
we call $\mathbf{M}_{v}$ the matrix form of a quaternion $v$.

In the QR-decomposition, $\mathbf{Q}$ denotes a unit in the matrix
ring $M_{n\times n}\left(\mathbb{C}\right)$ since $\det\left(\mathbf{Q}\right)\in\left\{ \pm1,\pm i\right\} $.
Suppose we have in the $t$th round and after a swap that $\mathbf{Q}^{t}\in\mathbb{C}^{2\times2}$, $\mathbf{R}^{t}\in\mathbb{C}^{2\times2}$, then the rotation operation by a quaternion $v^{*}$
is denoted by: 
\[
\mathbf{Q}^{t}\mathbf{R}^{t}=\underset{\triangleq\mathbf{Q}^{t+1}}{\underbrace{\mathbf{Q}^{t}\mathbf{M}_{v^{*}}^{-1}}}\underset{\triangleq\mathbf{R}^{t+1}}{\underbrace{\mathbf{M}_{v^{*}}\mathbf{R}^{t}}}.
\]
Since $\mathbf{Q}^{t}\in M_{2\times2}\left(\mathbb{C}\right)^{\times}$,
$\mathbf{M}_{v^{*}}^{-1}\in M_{2\times2}\left(\mathbb{C}\right)^{\times}$,
we have $\mathbf{Q}^{t+1}\in M_{2\times2}\left(\mathbb{C}\right)^{\times}$.
Denote the first column of $\mathbf{R}^{t}$ as $\left[R_{j-1,j},R_{j,j}\right]^{\top}.$
The rotation is about nulling the second entry, so we can choose the
quaternion as 
\[
v^{*}=\frac{R_{j-1,j}^{\dagger}}{\sqrt{|R_{j-1,j}|^{2}+|R_{j,j}|^{2}}}+j\frac{-R_{j,j}}{\sqrt{|R_{j-1,j}|^{2}+|R_{j,j}|^{2}}}.
\]

\section{\label{sec:cryptNTRU}Cryptanalysis on NTRU}

Consider a cyclotomic field $\mathbb{Q}(\zeta_m)$ defined by a cyclotomic polynomial $\Phi_m(x)$ of degree $\phi(m)$, where $\zeta_m$ denotes the $m$th  root of unity, and $\phi$ denotes Euler's totient function.
To crack the private key based on the given
public key $h\in \mathcal{R} \triangleq \mathbb{Z}[x]/\Phi_m(x)$ in the celebrated NTRU system, we need to find short vectors in a module-lattice $\Lambda(\mathcal{B})$ \cite{HoffsteinPS98,StehleS11,AlbrechtBD16}, whose $2\times 2$ basis is defined by
\[\mathcal{B}= \left[\begin{array}{cc}
q & h\\
0 & 1
\end{array}\right],\]
where $q\in \mathbb{N}$ is a natural number and $h\in \mathcal{R}$.

It is known that we can transform the $2\times 2$ basis $\mathcal{B}$ into a $2\phi(m)\times 2\phi(m)$ basis in $\mathbb{Q}$ and apply the conventional lattice reduction algorithm over real numbers. On the contrary, since 
 quadratic fields are subfields of cyclotomic fields owning to the Kronecker-Webers Theorem \cite{Kronecker-Weber1974}, we can leverage the following Galois extension
 \begin{figure}[h]
 	\center
 	\includegraphics[clip,height=0.25\textwidth]{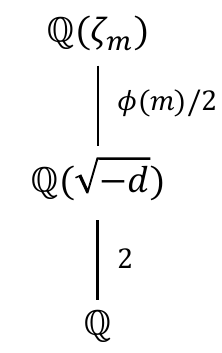}
 	\label{fig_ntru_expan} 
 \end{figure}

\noindent The reduction of basis can therefore be applied to smaller 
basis of dimension $\phi(m)\times \phi(m)$ in $\mathbb{Q}(\sqrt{-d})$, rather than of dimension $2\phi(m)\times 2\phi(m)$ in $\mathbb{Q}$, so as to enjoy shorter running time and sometimes better basis quality.

Consider the instance of $\mathbb{Q}(\zeta_m)=\mathbb{Q}(\zeta_{24})$, $\mathbb{Q}(\sqrt{-d})=\mathbb{Q}(\sqrt{-3})=\mathbb{Q}(\zeta_{3})$, $q=23$, and $h=5+7\zeta_{24}$. We observe that  $\mathbb{Q}(\zeta_{24})/\mathbb{Q}$ is generated by $x^8 - x^4 + 1$, 
 $\mathbb{Q}(\sqrt{-3})/\mathbb{Q}$ is generated by $x^2+x+1$,  and 
  $\mathbb{Q}(\zeta_{24})/\mathbb{Q}(\sqrt{-3})$ is generated by $x^4+\zeta_{3}^2$.
  For lattice reduction over $\mathbb{Q}$, we can adopt the integral basis of $\mathbb{Z}[\zeta_{24}]$:
  \[
  1, \zeta_{24}, ..., \zeta_{24}^7.
  \]
  For lattice reduction over $\mathbb{Q}(\zeta_{3})$, the basis for the $\mathbb{Z}[\zeta_{3}]$-module of rank $4$ is given by
  \[
  1, \zeta_{24}, \zeta_{24}^2, \zeta_{24}^3.
  \]
  The input module-lattice $\Lambda(\mathcal{B})$ 
  features basis vectors of (square) lengths $2116$ and $300$ in both $\mathbb{Q}$ and $\mathbb{Q}(\zeta_{3})$.
  
  By respectively performing LLL over $\mathbb{Q}$ and the proposed quadratic LLL over $\mathbb{Q}(\zeta_{3})$, the output vectors have (square) lengths of
  \begin{align*}
  	&160,
  	160,
  	160,
  	140,
  	192,
  	156,
  	140,
  	196,\\
  	& 160,
  	140,
  	140,
  	160,
  	160,
  	156,
  	192,
  	140, \,(\mathrm{LLL})
  \end{align*}
  and 
    \[140,
    140,
    160,
    140,
    140,
    160,
    160,
    160. \,(\mathrm{algebraic\,LLL})\] 
 This instance shows that algebraic LLL can find shorter vectors than its non-algebraic counterpart while reducing a smaller dimensional basis.

\bibliographystyle{IEEEtranMine}
\bibliography{lib}

\begin{IEEEbiography}[{\includegraphics[width=1in,height=1.25in,clip,keepaspectratio]{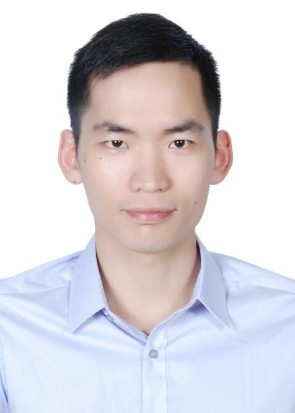}}]{Shanxiang Lyu}
received the B.S. and M.S. degrees in electronic and information engineering from
South China University of Technology, Guangzhou, China, in 2011
and 2014, respectively, and the Ph.D. degree from the
Electrical and Electronic Engineering Department, Imperial College London, UK,
in 2018. 
He is currently a lecturer   
with the College of Cyber Security, Jinan University. He received the superstar supervisor award of the National Crypto-Math Challenge of China in 2020.
His
research interests are in lattice theory, algebraic number theory, and their applications.
\end{IEEEbiography}

\begin{IEEEbiography}[{\includegraphics[width=1in,height=1.25in,clip,keepaspectratio]{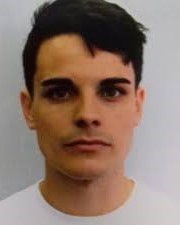}}]{Christian Porter}
	received the M.S. degree in Mathematics from University of York, UK, in 2017. Currently, he is working toward the Ph.D. degree in the
	Electrical and Electronic Engineering Department, Imperial College London, UK. His research focus is primarily on lattice reduction theory for cryptological and coding purposes.
\end{IEEEbiography}

\begin{IEEEbiography}[{\includegraphics[width=1in,height=1.25in,clip,keepaspectratio]{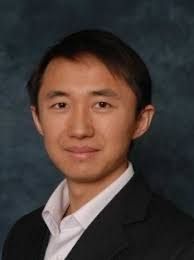}}]{Cong Ling} (S'99-A'01-M'04) 
	received the B.S. and M.S. degrees in electrical engineering from
	the Nanjing Institute of Communications Engineering, Nanjing, China, in 1995
	and 1997, respectively, and the Ph.D. degree in electrical engineering from
	the Nanyang Technological University, Singapore, in 2005.
	He had been on the faculties of the Nanjing Institute of Communications
	Engineering and King's College. He is currently a Reader (Associate Professor) with the Electrical and Electronic Engineering Department, Imperial
	College London. His research interests are coding, information theory, and
	security, with a focus on lattices.
	Dr. Ling has served as an Associate Editor for the IEEE TRANSACTIONS
	ON COMMUNICATIONS and the IEEE TRANSACTIONS ON VEHICULAR
	TECHNOLOGY.\end{IEEEbiography}

\end{document}